\crefname{hypothesis}{Hypothesis}{Hypotheses}
\title{Gaussian Process Landmarking on Manifolds\thanks{Submitted
\funding{This work is supported by Simons Math+X Investigators Award 400837 and NSF CAREER Award BCS-1552848.}}}
\author{Tingran Gao\thanks{Computational and Applied
    Mathematics Initiative, Department of Statistics, The University of Chicago, Chicago IL (\email{tingrangao@galton.uchicago.edu})}
\and Shahar Z. Kovalsky \thanks{Department of Mathematics, Duke University, Durham, NC (\email{shaharko@math.duke.edu})}
\and Ingrid Daubechies \thanks{Department of Mathematics and Department of Electrical and Computer Engineering, Duke University, Durham NC (\email{ingrid@math.duke.edu})}}
\DeclareMathOperator*{\argmax}{argmax}
\newcommand{\dd}{\,\mathrm{d}}
\newcommand{\GP}{\mathrm{GP}}
\begin{document}

\maketitle

\begin{abstract}
  As a means of improving analysis of biological shapes, we propose an algorithm for sampling a Riemannian manifold by sequentially selecting points with maximum uncertainty under a Gaussian process model. This greedy strategy is known to be near-optimal in the experimental design literature, and appears to outperform the use of user-placed landmarks in representing the geometry of biological objects in our application. In the noiseless regime, we establish an upper bound for the mean squared prediction error (MSPE) in terms of the number of samples and geometric quantities of the manifold, demonstrating that the MSPE for our proposed sequential design decays at a rate comparable to the oracle rate achievable by any sequential or non-sequential optimal design; to our knowledge this is the first result of this type for sequential experimental design. The key is to link the greedy algorithm to reduced basis methods in the context of model reduction for partial differential equations. We expect this approach will find additional applications in other fields of research.
\end{abstract}

\begin{keywords}
  Gaussian Process, Experimental Design, Active Learning, Manifold Learning, Reduced Basis Methods, Geometric Morphometrics
\end{keywords}

\begin{AMS}
  60G15, 62K05, 65D18
\end{AMS}

\section{Introduction}
This paper grew out of an attempt to apply principles of the statistics field of \emph{optimal experimental design} to \emph{geometric morphometrics}, a subfield of evolutionary biology that focuses on quantifying the (dis-)similarities between pairs of two-dimensional anatomical surfaces based on their spatial configurations. In contrast to methods for statistical estimation and inference, which typically focus on studying the error made by estimators with respect to a deterministically generated or randomly drawn (but fixed once given) collection of sample observations, and constructing estimators to minimize this error, the paradigm of optimal experimental design is to minimize the empirical risk by an ``optimal'' choice of sample locations, while the estimator itself and the number of samples are kept fixed \cite{Pukelsheim2006,ADT2007}. Finding an optimal design amounts to choosing sample points that are most informative for a class of estimators so as to reduce the number of observations; this is most desirable when even one observation is expensive to acquire (e.g. in spatial analysis (geostatistics) \cite{Stein2012,Cressie2015} and in computationally demanding computer experiments \cite{SWN2013}), but similar ideas have long been exploited in the probabilistic analysis of some classical numerical analysis problems (see e.g. \cite{Smith1918,Ylvisaker1975,Ritter2007}).

In this paper, we adopt the methodology of optimal experimental design for discretely sampling Riemannian manifolds, and propose a greedy algorithm that sequentially selects design points based on the uncertainty modeled by a Gaussian process. On anatomical surfaces of interest to geometric morphometrical applications, these design points play the role of \emph{anatomical landmarks}, or just \emph{landmarks}, which are geometrically or semantically meaningful feature points crafted by evolutionary biologists for quantitatively comparing large collections of biological specimens in the framework of \emph{Procrustes analysis}~\cite{Gower1975,DrydenMardia1998SSA,GowerDijksterhuis2004GPA}. The effectiveness of our approach on anatomical surfaces, along with more background information on geometric morphometrics and Procrustes analysis, is demonstrated in a companion paper \cite{GPLMK2}; though the prototypical application scenario in this paper and \cite{GPLMK2} is geometric morphometrics, we expect the approach proposed here to be more generally applicable to other scientific domains where compact or sparse data representation is demanded. In contexts different from evolutionary biology, closely related (continuous or discretized) manifold sampling problems are addressed in \cite{AS2002,MMRS2004,HS2005}, where  smooth manifolds are  discretized by optimizing the locations of (a fixed number of) points so as to minimize a Riesz functional, or by \cite{oztireli2010,RACT2015}, studying surface simplification via spectral subsampling or geometric relevance. These approaches, when applied to two-dimensional surfaces, tend to distribute points either empirically with respect to fine geometric details preserved in the discretized point clouds or uniformly over the underlying geometric object, whereas triangular meshes encountered in geometric morphometrics often lack fine geometric details but still demand non-uniform, sparse geometric features that are semantically/biologically meaningful; moreover, it is often not clear whether the desired anatomical landmarks are naturally associated with an energy potential. In contrast, our work is inspired by recent research on active learning with Gaussian processes \cite{CGJ1996,RW2006,KGUD2007} as well as related applications in finding landmarks along a manifold \cite{LP2015ICML}. Different from many graph-Laplacian-based manifold landmarking algorithms in semi-supervised learning (e.g. \cite{ZLG2003,XZLD2015}), our approach considers a Gaussian process on the manifold whose covariance structure is specified by the heat kernel, with a greedy landmarking strategy that aims to produce a set of geometrically significant samples with adequate coverage for biological traits. Furthermore, in stark contrast with \cite{dST2004,LF2017} where landmarks are utilized primarily for improving computational efficiency, the landmarks produced by our algorithm explicitly and directly minimize the \emph{mean squared prediction error} (MSPE) and thus bear rich information for machine learning and data mining tasks. The optimality of the proposed greedy procedure is also established (see \Cref{sec:theo-analysis}); this is apparently much less straightforward for non-deterministic, sampling-based manifold landmarking algorithms such as \cite{DLFU2013,CKNS2015,WG2015}.

The rest of this paper is organized as follows. The remainder of this introduction section motivates our main algorithm and discusses other related work. \Cref{sec:background} sets notations and provides background materials for Gaussian processes and the construction of heat kernels on Riemannian manifolds (and discretizations thereof), as well as the ``reweighted kernel'' constructed from these discretized heat kernels; \Cref{sec:gauss-proc-active} presents an unsupervised landmarking algorithm for anatomical surfaces inspired by recent work on uncertainty sampling in Gaussian process active learning \cite{LP2015ICML}; \Cref{sec:theo-analysis} provides the convergence rate analysis and establishes the MSPE optimality; \Cref{sec:disc-future-work} summarizes the current paper with a brief sketch of potential future directions. We defer implementation details of the proposed algorithm for applications in geometric morphometrics to the companion paper \cite{GPLMK2}.

\subsection{Motivation}
\label{sec:motivation}

To see the link between landmark identification and active learning with uncertainty sampling \cite{LG1994,Settles2010}, let us consider the regression problem of estimating a function $f:V\rightarrow\mathbb{R}$ defined over a point cloud $V\subset\mathbb{R}^D$. Rather than construct the estimator from random sample observations, we adopt the point of view of active learning, in which one is allowed to sequentially query the values of $f$ at user-picked vertices $x\in V$. In order to minimize the empirical risk of an estimator $\hat{f}$ within a given number of iterations, the simplest and most commonly used strategy is to first evaluate (under reasonable probabilistic model assumptions) the informativeness of the vertices on the mesh that have not been queried, and then greedily choose to inquire the value of $f$ at the vertex $x$ at which the response value $\hat{f} \left( x \right)$---inferred from all previous queries---is most ``uncertain'' in the sense of attaining highest predictive error (though other uncertainty measures such as the Shannon entropy could be used as well); these sequentially obtained highest-uncertainty points will be treated as morphometrical \emph{landmarks} in our proposed algorithm.

This straightforward application of an active learning strategy summarized above relies upon selecting a regression function $f$ of rich biological information. In the absence of a natural candidate regression function $f$, we seek to reduce in every iteration the maximum ``average uncertainty'' of a class of regression functions, e.g., specified by a Gaussian process prior \cite{RW2006}. Throughout this paper we will denote $\GP \left( m,K \right)$ for the Gaussian process on a smooth, compact Riemannian manifold $M$ with mean function $m:M\rightarrow\mathbb{R}$ and covariance function $K:M\times M\rightarrow\mathbb{R}$. If we interpret choosing a single most ``biologically meaningful'' function $f$ as a manual ``feature handcrafting'' step, the specification of a Gaussian process prior can be viewed as a less restrictive and more stable ``ensemble'' version; the geometric information can be conveniently encoded into the prior by specifying an appropriate covariance function $K$. We construct such a covariance function in \Cref{sec:gauss-proc-regr} by reweighting the heat kernel of the Riemannian manifold $M$, adopting (but in the meanwhile also appending further geometric information to) the methodology of Gaussian process optimal experimental design \cite{SSW1989,SWN2013,FOU2013} and \emph{sensitivity analysis} \cite{ST2002,OOH2004} from the statistics literature.

\subsection{Our Contribution and Other Related Work}
\label{sec:our-contr-other}

The main theoretical contribution of this paper is a convergence rate analysis for the greedy algorithm of uncertainty-based sequential experimental design, which amounts to estimating the uniform rate of decay for the prediction error of a Gaussian process as the number of greedily picked design points increases; on a $C^{\infty}$-manifold we deduce that the convergence is faster than any inverse polynomial rate, which is also the optimal rate any greedy or non-greedy landmarking algorithm can attain on a generic smooth manifold. This analysis makes use of recent results in the analysis of reduced based methods, by converting the Gaussian process experimental design into a basis selection problem in a reproducing kernel Hilbert space associated with the Gaussian process. To our knowledge, there does not exist in the literature any earlier analysis of this type for greedy algorithms in optimal experimental design; the convergence results obtained from this analysis can also be used to bound the number of iterations in Gaussian process active learning \cite{CGJ1996,KGUD2007,LP2015ICML} and maximum entropy design \cite{SWN2013,KSG2008,SKKS2010}. From a numerical linear algebra perspective, though the rank-$1$ update procedure detailed in \cref{rem:rank-one-updates} coincides with the well-known algorithm of pivoted Cholesky decomposition for symmetric positive definite matrices (c.f. \Cref{sec:numer-line-algebra}), we are not aware either of similar results in that context for the performance of pivoting. We thus expect our theoretical contribution to shed light upon gaining deeper understandings of other manifold landmarking algorithms in active and semi-supervised learning \cite{ZLG2003,XZLD2015,CKNS2015,WG2015,dST2004,LF2017}. We discuss implementation details of our algorithm for applications in geometric morphomerics in a companion paper \cite{GPLMK2}.

We point out that, though experimental design is a classical problem in the statistical literature \cite{Fedorov1972,CV1995,Pukelsheim2006}, it is only very recently that interests in computationally efficient experimental design algorithms began to arise in the computer science community \cite{BGS2010,AB2013,Nikolov2015,WYS2017,ZLSW2017,ZLSW2017journal_version}. Most experimental design algorithms, based on various types of optimality criteria including but not limited to A(verage)-, D(eterminant)-, E(igen)-, V(ariance)-, G-optimality and Bayesian alphabetical optimality, are NP-hard computational in their exact form \cite{CM2009,CH2012}, with the only exception of T(race)-optimality which is trivial to solve. For computer scientists, the interesting problem is to find polynomial algorithms that efficiently find $\left( 1+O \left( \epsilon \right) \right)$-approximations of the optimal solution to the exact problem, where $\epsilon>0$ is expected to be as small as possible but depends on the size of the problem and the pre-fixed budget for the number of design points; oftentimes these approximation results also require certain constraints on the relative sizes of the dimension of the ambient space, the number of design points, and the total number of candidate points. Different from those approaches, our theoretical contribution assumes no relations between these quantities, and the convergence rate is with respect to the increasing number of landmark points (as opposed to a pre-fixed budget); nevertheless, similar to results obtained in \cite{BGS2010,AB2013,Nikolov2015,WYS2017,ZLSW2017,ZLSW2017journal_version}, our proposed algorithm has polynomial complexity and is thus computationally tractable; see \Cref{sec:numer-line-algebra} for more details. We refer interested readers to \cite{Pukelsheim2006} for more exhaustive discussions of the optimality criteria used in experimental design.

\section{Background}
\label{sec:background}

\subsection{Heat Kernels and Gaussian Processes on Riemannian Manifolds: A Spectral Embedding Perspective}
\label{sec:spectral-embedding}

Let $\left( M,g \right)$ be an orientable compact Riemannian manifold of dimension $d\geq 1$ with finite volume, where $g$ is the Riemannian metric on $M$. Denote $\dd\mathrm{vol}_M$ for the canonical volume form $M$ with coordinate representation
\begin{equation*}
  \dd\mathrm{vol}_M \left( x \right) = \sqrt{\left| g \left( x \right) \right|}\,\dd x^1\wedge\cdots\wedge\dd x^d.
\end{equation*}
The finite volume will be denoted as
\begin{equation*}
  \mathrm{Vol}\left( M \right)=\int_M\dd\mathrm{vol}_M \left( x \right) = \int_M\sqrt{\left| g \left( x \right) \right|}\,\dd x^1\wedge\cdots\wedge\dd x^d < \infty,
\end{equation*}
and we will fix the canonical normalized volume form $\dd\mathrm{vol}_M/\mathrm{Vol}\left( M \right)$ as reference. Throughout this paper, all distributions on $M$ are absolutely continuous with respect to $\dd\mathrm{vol}_M/\mathrm{Vol}\left( M \right)$.

The single-output regression problem on the Riemannian manifold $M$ will be described as follows. Given independent and identically distributed observations $\left\{ \left(X_i,Y_i\right)\in M\times\mathbb{R}\mid 1\leq i\leq n \right\}$ of a random variable $\left( X,Y \right)$ on the product probability space $M\times \mathbb{R}$, the goal of the regression problem is to estimate the conditional expectation
\begin{equation}
  \label{eq:conditional-expectation}
  f \left( x \right):=\mathbb{E} \left( Y\mid X=x \right)
\end{equation}
which is often referred to as a regression function of $Y$ on $X$ \cite{Tsybakov2008}. The joint distribution of $X$ and $Y$ will always be assumed absolutely continuous with respect to the product measure on $M\times\mathbb{R}$ for simplicity. A Gaussian process (or \emph{Gaussian random field}) on $M$ with mean function $m:M\rightarrow\mathbb{R}$ and covariance function $K:M\times M\rightarrow \mathbb{R}$ is defined as the stochastic process of which any finite marginal distribution on $n$ fixed points $x_1,\cdots,x_n\in M$ is a multivariate Gaussian distribution with mean vector
$$m_n:=\left( m \left( x_1 \right),\cdots,m \left( x_n \right) \right)\in\mathbb{R}^n$$
and covariance matrix
$$K_n :=
\begin{pmatrix}
  K \left( x_1,x_1 \right) & \cdots & K \left( x_1,x_n \right)\\
  \vdots & & \vdots\\
  K \left( x_n,x_1 \right) & \cdots & K \left( x_n,x_n \right)
\end{pmatrix}\in\mathbb{R}^{n\times n}.
$$
A Gaussian process with mean function $m:M\rightarrow\mathbb{R}$ and covariance function $K:M\times M\rightarrow\mathbb{R}$ will be denoted as $\GP \left( m,K \right)$. Under model $Y\sim \GP \left( m,K \right)$, given observed values $y_1,\cdots,y_n$ at locations $x_1,\cdots,x_n$, the \emph{best linear predictor} (BLP) \cite{Stein2012,SWN2013} for the random field at a new point $x$ is given by the conditional expectation
\begin{equation}
\label{eq:blp}
  Y^{*}\left( x \right):=\mathbb{E}\left[ Y \left( x \right)\mid Y \left( x_1 \right)=y_1,\cdots,Y \left( x_n \right)=y_n \right]=m \left( x \right)+k_n \left( x \right)^{\top}K_n^{-1}\left( Y_n-m_n \right)
\end{equation}
where $Y_n=\left( y_1,\cdots,y_n \right)^{\top}\in \mathbb{R}^n$, $k_n \left( x \right)=\left( K \left( x,x_1 \right),\cdots,K \left( x,x_n \right) \right)^{\top}\in \mathbb{R}^n$; at any $x\in M$, the expected squared error, or \emph{mean squared prediction error} (MSPE), is defined as
\begin{equation}
  \label{eq:prediction-error}
  \begin{aligned}
    \mathrm{MSPE}\left( x; x_1,\cdots,x_n \right):&=\mathbb{E}\left[ \left( Y \left( x \right)-Y^{*}\left( x \right) \right)^2 \right]\\
    &=\mathbb{E}\left[ \left(Y \left( x \right)-\mathbb{E}\left[ Y \left( x \right)\mid Y \left( x_1 \right)=y_1,\cdots,Y \left( x_n \right)=y_n \right]\right)^2 \right]\\
    &=K \left( x,x \right)-k_n \left( x \right)^{\top}K_n^{-1}k_n \left( x \right),
  \end{aligned}
\end{equation}
which is a function over $M$. Here the expectation is with respect to all realizations $Y\sim\GP \left( m,K \right)$. Squared integral ($L^2$) or sup ($L^{\infty}$) norms of the pointwise MSPE are often used as a criterion for evaluating the prediction performance over the experimental domain. In geospatial statistics, interpolation with \eqref{eq:blp} and \eqref{eq:prediction-error} is known as \emph{kriging}.



Our analysis in this paper concerns the sup norm of the prediction error with $n$ design points $x_1,\cdots,x_n$ picked using a greedy algorithm, i.e. the quantity
\begin{equation*}
  \sigma_n:=\sup_{x\in M}\mathrm{MSPE}\left( x; x_1,\cdots,x_n \right)
\end{equation*}
where $x_1,\cdots,x_n$ are chosen according to \cref{alg:gaussian-process-landmarking}. This quantity is compared with the ``oracle'' prediction error attainable by any sequential or non-sequential experimental design with $n$ points, i.e.
\begin{equation*}
  d_n:=\inf_{x_1,\cdots,x_n\in M}\sup_{x\in M} \mathrm{MSPE}\left( x; x_1,\cdots,x_n \right)
\end{equation*}
As will be shown in \eqref{eq:kolmogorov-width-gpr} in Section~\ref{sec:theo-analysis}, $d_n$ can be interpreted as the \emph{Kolmogorov width} of approximating a \emph{Reproducing Kernel Hilbert Space} (RKHS) with a reduced basis. The RKHS we consider is a natural one associated with a Gaussian process; see e.g. \cite{CS2000,MRT2012} for general introductions on RKHS and \cite{vdVvZ2008} for RKHS associated with Gaussian processes. We include a brief sketch of the RKHS theory needed for understanding \Cref{sec:theo-analysis} in \cref{sec:repr-kern-hilb}.

On Riemannian manifolds, there is a natural choice for the kernel function: the heat kernel, i.e. the kernel of the Laplace-Beltrami operator. Denote $\Delta:C^2 \left( M \right)\rightarrow C^2 \left( M \right)$ for the Laplace-Beltrami operator on $M$ with respect to the metric $g$, i.e.
\begin{equation*}
  \Delta f = \frac{1}{\sqrt{\left| g \right|}}\partial_i \left( \sqrt{\left| g \right|}\, g^{ij}\partial_jf \right),\quad \forall f \in C^{\infty}\left( M \right)
\end{equation*}
where the sign convention is such that $-\Delta$ is positive semidefinite. If the manifold $M$ has no boundary, the spectrum of $-\Delta$ is well-known to be real, non-negative, discrete, with eigenvalues satisfying $0=\lambda_0<\lambda_1\leq\lambda_2\leq\cdots\nearrow\infty$, with $\infty$ the only accumulation point of the spectrum; when $M$ has non-empty boundary we assume Dirichlet boundary condition, so the same inequalities hold for the eigenvalues. If we denote $\phi_i$ for the eigenfunction of $\Delta$ corresponding to the eigenvalue $\lambda_i$, then the set $\left\{ \phi_i\mid i=0,1,\cdots \right\}$ constitutes an orthonormal basis for $L^2 \left( M \right)$ under the standard inner product
\begin{equation*}
  \left\langle f_1,f_2 \right\rangle_M:=\int_M f_1 \left( x \right)f_2 \left( x \right)\dd\mathrm{vol}_M \left( x \right).
\end{equation*}
The heat kernel $k_t \left( x,y \right):=k \left( x,y;t \right)\in C^2 \left( M\times M\right)\times C^{\infty}\left(\left( 0,\infty \right)\right)$ is the fundamental solution of the heat equation on $M$:
\begin{equation*}
\partial_tu \left( x,t \right)=-\Delta u \left( x,t \right),\qquad x\in M,\,t\in \left( 0,\infty \right).
\end{equation*}
That is, if the initial data is specified as
\begin{equation*}
u \left( x,t=0 \right)=v \left( x \right)
\end{equation*}
then
\begin{equation*}
  u \left( x,t \right) = \int_Mk_t \left( x,y \right)v \left( y \right)\dd\mathrm{vol}_M \left( y \right).
\end{equation*}
In terms of the spectral data of $\Delta$ (see e.g. \cite{Rosenberg1997,BGV2003}), the heat kernel can be written as
\begin{equation}
\label{eq:manifold-heat-kernel}
  k_t \left(x, y\right)=\sum_{i=0}^{\infty}e^{-\lambda_i t}\phi_i \left( x \right)\phi_i\left( y \right),\quad\forall t\geq 0,\,\, x, y\in M.
\end{equation}
For any fixed $t>0$, the heat kernel defines a Mercer kernel on $M$ by
$$ \left( x,y \right)\mapsto k_t \left( x,y \right)\quad\forall \left( x,y \right)\in M\times M$$
and the feature mapping (see \eqref{eq:feature-mapping}) takes the form
\begin{equation}
\label{eq:heat-kernel-expansion-eigenfunctions}
  M\ni x\longmapsto \Phi_t \left( x \right):=\left( e^{-\lambda_0t/2}\phi_0 \left( x \right),e^{-\lambda_1t/2}\phi_1 \left( x \right),\cdots,e^{-\lambda_it/2}\phi_i \left( x \right),\cdots \right)\in\ell^2
\end{equation}
where $\ell^2$ is the infinite sequence space equipped with a standard inner product; see e.g. \cite[\textsection II.1 Example 3]{RS1980}. Note in particular that
\begin{equation}
\label{eq:heat-kernel-as-inner-product}
  k_t \left( x,y \right)=\left\langle \Phi_t \left( x \right),\Phi_t \left( y \right) \right\rangle_{\ell^2}.
\end{equation}
In fact, up to a multiplicative constant $c \left( t \right)=\sqrt{2}\left( 4\pi \right)^{\frac{d}{4}}t^{\frac{n+2}{4}}$, the feature mapping $\Phi_t:M\rightarrow\ell^2$ has long been studied in spectral geometry \cite{BBG1994} and is known to be an embedding of $M$ into $\ell^2$; furthermore, with the multiplicative correction by $c \left( t \right)$, the pullback of the canonical metric on $\ell^2$ is asymptotically equal to the Riemannian metric on $M$.

In this paper we focus on Gaussian processes on Riemannian manifolds with heat kernels (or ``reweighted'' counterparts thereof; see \Cref{sec:gauss-proc-regr}) as covariance functions. There are at least two reasons for heat kernels to be considered as natural candidates for covariance functions of Gaussian processes on manifolds. First, as argued in \cite[\S2.5]{CKP2014}, the abundant geometric information encoded in the Laplace-Beltrami operator makes the heat kernel a canonical choice for Gaussian processes; Gaussian processes defined this way impose natural geometric priors based on randomly rescaled solutions of the heat equation. Second, by \eqref{eq:heat-kernel-as-inner-product}, a Gaussian process on $M$ with heat kernel is equivalent to a Gaussian process on the embedded image of $M$ into $\ell^2$ under the feature mapping \eqref{eq:heat-kernel-expansion-eigenfunctions} with a dot product kernel; this is reminiscent of the methodology of \emph{extrinsic Gaussian process regression} (eGPR) \cite{LNCD2017} on manifolds --- in order to perform Gaussian process regression on a nonlinear manifold, eGPR first embeds the manifold into a Euclidean space using an arbitrary embedding, then performs Gaussian process regression on the embedded image following standard procedures for Gaussian process regression. This spectral embedding interpretation also underlies recent work constructing Gaussian priors, by means of the graph Laplacian, for uncertainty quantification of graph semi-supervised learning \cite{BLSZ2017}.

\subsection{Discretized and Reweighted Heat Kernels}
\label{sec:gauss-proc-regr}

When the Riemannian manifold $M$ is a submanifold embedded in an ambient Euclidean space $\mathbb{R}^D$ ($D\gg d$) and sampled only at finitely many points $\left\{ x_1,\cdots,x_n \right\}$, we know from the literature of Laplacian eigenmaps \cite{LapEigMaps2003,BelkinNiyogi2005} and diffusion maps \cite{CoifmanLafon2006,Singer2006ConvergenceRate,SingerWu2012VDM} that the extrinsic squared exponential kernel matrix
\begin{equation}
\label{eq:sq-exp-kernel-submfld}
  K = \left( K_{ij} \right)_{1\leq i,j\leq n} = \left( \exp \left( -\frac{\left\| x_i-x_j \right\|^2}{t} \right) \right)_{1\leq i,j\leq n}
\end{equation}
is a consistent estimator (up to a multiplicative constant) of the heat kernel of the manifold $M$ if $\left\{ x_i\mid 1\leq i\leq n \right\}$ are sampled uniformly and i.i.d. on $M$ with appropriately adjusted bandwidth parameter $t>0$ as $n\rightarrow\infty$; similar results holds when the squared exponential kernel is replaced with any anisotropic kernel, and additional renormalization techniques can be used to adjust the kernel if the samples are i.i.d. but not uniformly distributed on $M$, see e.g. \cite{CoifmanLafon2006} for more details. These theoretical results in manifold learning justify using extrinsic kernel functions in a Gaussian process regression framework when the manifold is an embedded submanifold of an ambient Euclidean space; the kernel \eqref{eq:sq-exp-kernel-submfld} is also used in \cite{YD2016} for Gaussian process regression on manifolds in a Bayesian setting. Nevertheless, one may well use other discrete approximations of the heat kernel in place of \eqref{eq:sq-exp-kernel-submfld} without affecting our theoretical results in \Cref{sec:theo-analysis}, as long as the kernel matrix $K$ is positive (semi-)definite and defines a valid Gaussian process for our landmarking purposes.

The heat kernel of the Riemannian manifold $M$ defines covariance functions for a family of Gaussian processes on $M$, but this type of covariance functions depends only on the spectral properties of $M$, whereas in practice we would often like to incorporate prior information addressing relative high/low confidence of the selected landmarks. For example, the response variables might be measured with higher accuracy (or equivalently the influence of random observation noise is damped) where the predictor falls on a region on the manifold $M$ with lower curvature. We encode this type of prior information regarding the relative importance of different locations on the domain manifold in a smooth positive weight function $w:M\rightarrow\mathbb{R}_+$ defined on the entire manifold, whereby the higher values of $w \left( x \right)$ indicate a relatively higher importance if a predictor variable is sampled near $x\in M$. Since we assume $M$ is closed, $w$ is bounded below away from zero. To ``knit'' the weight function into the heat kernel, notice that by the reproducing property we have
\begin{equation}
\label{eq:kernel-reproducing}
  k_t \left( x,y \right)=\int_Mk_{t/2}\left( x,z \right)k_{t/2} \left( z,y \right)\dd \mathrm{vol}_M \left( z \right)
\end{equation}
and we can naturally apply the weight function to deform the volume form, i.e. define
\begin{equation}
\label{eq:location-prior-kernel}
  k^w_t \left( x,y \right)=\int_Mk_{t/2}\left( x,z \right)k_{t/2} \left( z,y \right)w \left( z \right)\dd \mathrm{vol}_M \left( z \right).
\end{equation}
Obviously, $k_t^w \left( \cdot,\cdot \right)=k_t \left( \cdot,\cdot \right)$ on $M\times M$ if we pick $w\equiv 1$ on $M$, using the expression \eqref{eq:manifold-heat-kernel} for heat kernel $k_t \left( \cdot,\cdot \right)$ and the orthonormality of the eigenfunctions $\left\{ \phi_i\mid i=0,1,\cdots \right\}$. Intuitively, \eqref{eq:location-prior-kernel} reweighs the mutual interaction between different regions on $M$ such that the portions with high weights have a more significant influence on the covariance structure of the Gaussian process on $M$. Results established for $\GP \left( m,k_t \right)$ can often be directly adapted for $\GP \left( m,k^w_t \right)$.

In practice, when the manifold is sampled only at finitely many i.i.d. points $\left\{ x_1,\cdots,x_n \right\}$ on $M$, the reweighted kernel can be calculated from the discrete extrinsic kernel matrix \eqref{eq:sq-exp-kernel-submfld} with $t$ replaced with $t/2$:
\begin{equation}
  \label{eq:sq-exp-kernel-submfld-reweighted}
    K^w=\left( K^w_{ij} \right)_{1\leq i,j\leq n}=\left( \sum_{k=1}^n e^{-\frac{\left\| x_i-x_k \right\|^2}{t/2}} \cdot w \left( x_k \right)\cdot e^{-\frac{\left\| x_k-x_j \right\|^2}{t/2}} \right)_{1\leq i,j\leq n}=K^{\top}WK
\end{equation}
where $W$ is a diagonal matrix of size $n\times n$ with $w \left( x_k \right)$ at its $k$-th diagonal entry, for all $1\leq k\leq n$, and $K$ is the discrete squared exponential kernel matrix \eqref{eq:sq-exp-kernel-submfld}. It is worth pointing out that the reweighted kernel $K^w$ no longer equals the kernel $K$ in \eqref{eq:sq-exp-kernel-submfld} even when we set $w\equiv 1$ at this discrete level. Similar kernels to \eqref{eq:location-prior-kernel} have also appeared in \cite{CCC2017} as the symmetrization of an asymmetric anisotropic kernel.

Though the reweighting step appears to be a straightforward implementation trick, it turns out to be crucial in the application of automated geometric morphometrics: the landmarking algorithm that will be presented in \Cref{sec:gauss-proc-active} produces biologically much more representative features on anatomical surfaces when the reweighted kernel is adopted. We demonstrate this in greater detail in \cite{GPLMK2}.

\section{Gaussian Process Landmarking}
\label{sec:gauss-proc-active}

We present in this section an algorithm motivated by \cite{LP2015ICML} that automatically places ``landmarks'' on a compact Riemannian manifold using a Gaussian process active learning strategy. Let us begin with an arbitrary nonparametric regression model in the form of \cref{eq:conditional-expectation}. Unlike in standard supervised learning in which a finite number of sample-label pairs are provided, an active learning algorithm can iteratively decide, based on memory of all previously inquired sample-label pairs, which sample to inquire for label in the next step. In other words, given sample-label pairs $\left( X_1,Y_1 \right), \left( X_2,Y_2 \right),\cdots,\left( X_n,Y_n \right)$ observed up to the $n$-th step, an active learning algorithm can decide which sample $X_{n+1}$ to query for the label information $Y_{n+1}=f \left( X_{n+1} \right)$ of the regression function $f$ to be estimated; typically, the algorithm assumes full knowledge of the sample domain, has access to the regression function $f$ as a black box, and strives to optimize its query strategy so as to estimate $f$ in as few steps as possible. With a Gaussian process prior $\GP \left( m,K \right)$ on the regression function class, the joint distribution of a finite collection of $\left( n+1 \right)$ response values $\left( Y_1,\cdots,Y_n,Y_{n+1} \right)$ is assumed to follow a multivariate Gaussian distribution $\mathscr{N}_{n+1} \left( m \left( X_1,\cdots,X_{n+1} \right),K\left( X_1,\cdots,X_{n+1} \right) \right)$ where
\begin{equation}
  \label{eq:gpr-mvn}
  \begin{aligned}
    m \left( X_1,\cdots,X_{n+1} \right) &=
    \begin{pmatrix}
      m \left( X_1 \right),\cdots, m \left( X_{n+1} \right)
    \end{pmatrix}\in\mathbb{R}^n,\\
    K \left( X_1,\cdots,X_{n+1} \right) &=
    \begin{pmatrix}
      K\left( X_1,X_1 \right) & \cdots & K\left( X_1,X_{n+1} \right)\\
      \vdots & &\vdots\\
      K\left( X_{n+1},X_1 \right) & \cdots & K\left( X_{n+1},X_{n+1} \right)
    \end{pmatrix}\in\mathbb{R}^{\left( n+1 \right)\times \left( n+1 \right)}.
  \end{aligned}
\end{equation}
For simplicity of statement, the rest of this paper will use short-hand notations
\begin{equation}
  \label{eq:simplify-notation-vector}
  \begin{aligned}
    &X_n^1=
    \begin{pmatrix}
      X_1,\cdots,X_n
    \end{pmatrix}\in M^n,\quad Y_n^1=
    \begin{pmatrix}
      Y_1,\cdots, Y_n
    \end{pmatrix}\in\mathbb{R}^n,
  \end{aligned}
\end{equation}
and
\begin{equation}
  \label{eq:simplify-notation-matrix}
  \begin{aligned}
    K_{n,n}&=K \left( X_1,\cdots,X_n \right)\in\mathbb{R}^{n\times n},\\
    K \left( X,X_n^1 \right)&=
    \begin{pmatrix}
      K \left( X,X_1 \right),\cdots,K \left( X,X_n \right)
    \end{pmatrix}^{\top}\in\mathbb{R}^n.
  \end{aligned}
\end{equation}
Given $n$ observed samples $\left( X_1,Y_1 \right),\cdots,\left( X_n,Y_n \right)$, at any $X\in M$, the conditional probability of the response value $Y \left( X \right)\mid Y_n^1$ follows a normal distribution $$\mathscr{N} \left( \xi_n \left( X \right), \Sigma_n \left( X \right) \right)$$ where
\begin{equation}
  \label{eq:gaussian-process-new-sample}
  \begin{aligned}
    \xi_n \left( X \right)&=K \left( X,X_n^1 \right)^{\top}K_n^{-1}Y_n^1,\\
    \Sigma_n \left( X \right)&=K \left( X,X \right)-K \left( X,X_n^1 \right)^{\top}K_{n,n}^{-1}K \left( X,X_n^1 \right).
  \end{aligned}
\end{equation}
In our landmarking algorithm, we simply choose $X_{n+1}$ to be the location on the manifold $M$ with the largest variance, i.e.
\begin{equation}
  \label{eq:active-learning-rule}
  \begin{aligned}
    X_{n+1}:=\argmax_{X\in M} \Sigma_n \left( X \right).
  \end{aligned}
\end{equation}
Notice that this successive procedure of ``landmarking'' $X_1,X_2,\cdots$ on $M$ is independent of the specific choice of regression function in $\GP \left( m,K \right)$ since we only need the covariance function $K:M\times M\rightarrow\mathbb{R}$.

\subsection{Algorithm}
\label{sec:algorithm}

The main algorithm of this paper, an unsupervised landmarking procedure for anatomical surfaces, will use a discretized, reweighted kernel constructed from triangular meshes that digitize anatomical surfaces. We now describe this algorithm in full detail. Let $M$ be a $2$-dimensional compact surface isometrically embedded in $\mathbb{R}^3$, and denote $\kappa:M\rightarrow\mathbb{R}$, $\eta:M\rightarrow\mathbb{R}$ for the Gaussian curvature and (scalar) mean curvature of $M$. Define a family of weight function $w_{\lambda,\rho}:M\rightarrow\mathbb{R}_{\geq 0}$ parametrized by $\lambda\in \left[ 0,1 \right]$ and $\rho>0$ as
\begin{equation}
  \label{eq:curvature-weight-function}
  w_{\lambda,\rho} \left( x \right) = \frac{\lambda\left| \kappa \left( x \right) \right|^{\rho}}{\displaystyle \int_M \left|\kappa \left( \xi \right)\right|^{\rho}\dd\mathrm{vol}_M \left( \xi \right)}+\frac{\left( 1-\lambda \right)\left| \eta \left( x \right) \right|^{\rho}}{\displaystyle \int_M \left|\eta \left( \xi \right)\right|^{\rho}\dd\mathrm{vol}_M \left( \xi \right)},\quad\forall x\in M.
\end{equation}
This weight function seeks to emphasize the influence of high curvature locations on the surface $M$ on the covariance structure of the Gaussian process prior $\GP \left( m,k_t^{w_{\lambda,\rho}} \right)$, where $k_t^{w_{\lambda,\rho}}$ is the reweighted heat kernel defined in \cref{eq:location-prior-kernel}. We stick in this paper with simple kriging [setting $m\equiv 0$ in $\GP \left( m,K \right)$], and use in our implementation default values $\lambda=1/2$ and $\rho=1$ (but one may wish to alter these values to fine-tune the landscape of the weight function for a specific application).


For all practical purposes, we only concern ourselves with $M$ being a piecewise linear surface, represented as a discrete triangular mesh $T=\left( V,E \right)$ with vertex set $V = \left\{ x_1,\cdots,x_{\left| V \right|} \right\}\subset\mathbb{R}^3$ and edge set $E$. We calculate the mean and Gaussian curvature functions $\eta,\kappa$ on the triangular mesh $\left( V,E \right)$ using standard algorithms in computational geometry \cite{CM2003,ACDLD2003}. The weight function $w_{\lambda,\rho}$ can then be calculated at each vertex $x_i$ by
\begin{equation}
\label{eq:weight-function-discrete}
  w_{\lambda,\rho}\left( x_i \right)=\frac{\lambda \left| \kappa \left( x_i \right) \right|^{\rho}}{\displaystyle \sum_{k=1}^{\left| V \right|} \left| \kappa \left( x_k \right) \right|^{\rho}\nu \left( x_k \right)}+\frac{\left( 1-\lambda \right) \left| \eta \left( x_i \right) \right|^{\rho}}{\displaystyle \sum_{k=1}^{\left| V \right|} \left| \eta \left( x_k \right) \right|^{\rho}\nu \left( x_k \right)},\qquad \forall x_i\in V
\end{equation}
where $\nu \left( x_k \right)$ is the area of the Voronoi cell of the triangular mesh $T$ centered at $x_i$. The reweighted heat kernel $k_t^{w_{\lambda,\rho}}$ is then defined on $V\times V$ as
\begin{equation}
\label{eq:location-prior-kernel-discrete}
  k_t^{w_{\lambda,\rho}}\left( x_i,x_j \right)=\sum_{k=1}^{\left| V \right|}k_{t/2}\left( x_i,x_k \right)k_{t/2}\left( x_k,x_j \right)w_{\lambda,\rho}\left( x_k \right)\nu \left( x_k \right)
\end{equation}
where the (unweighted) heat kernel $k_t$ is calculated as in \cref{eq:sq-exp-kernel-submfld}. Until a fixed total number of landmarks are collected, at step $\left( n+1 \right)$ the algorithm computes the uncertainty score $\Sigma_{\left( n+1 \right)}$ on $V$ from the existing $n$ landmarks $\xi_1,\cdots,\xi_n$  by
\begin{equation}
\label{eq:uncertainty-score}
  \Sigma_{\left( n+1 \right)} \left( x_i \right)=k_t^{w_{\lambda,\rho}}\left( x_i, x_i \right)-
  k_t^{w_{\lambda,\rho}} \left( x_i,\xi_n^1 \right)^{\top}
  k_t^{w_{\lambda,\rho}}\left( \xi_n^1,\xi_n^1 \right)^{-1}k_t^{w_{\lambda,\rho}} \left( x_i,\xi_n^1 \right)\quad\forall x_i\in V
\end{equation}
where
\begin{equation*}
  \begin{aligned}
    k_t^{w_{\lambda,\rho}} \left( x_i,\xi_n^1 \right)&:=\begin{pmatrix}
    k_t^{w_{\lambda,\rho}}\left( x_i,\xi_1 \right)\\
    \vdots\\
    k_t^{w_{\lambda,\rho}}\left( x_i,\xi_n \right)
  \end{pmatrix},\\
  k_t^{w_{\lambda,\rho}} \left( \xi_n^1,\xi_n^1 \right)&:=\begin{pmatrix}
    k_t^{w_{\lambda,\rho}}\left( \xi_1,\xi_1 \right) & \cdots & k_t^{w_{\lambda,\rho}}\left( \xi_1,\xi_n \right)\\
    \vdots & & \vdots\\
    k_t^{w_{\lambda,\rho}}\left( \xi_n,\xi_1 \right) & \cdots & k_t^{w_{\lambda,\rho}}\left( \xi_n,\xi_n \right)
  \end{pmatrix},
  \end{aligned}
\end{equation*}
and pick the $\left( n+1 \right)$-th landmark $\xi_{n+1}$ according to the rule
\begin{equation*}
  \xi_{n+1} = \argmax_{x_i\in V}\Sigma_{\left( n+1 \right)}\left( x_i \right).
\end{equation*}
If there are more than one maximizer of $\Sigma_{\left( n+1 \right)}$, we just randomly pick one; at step $1$ the algorithm simply picks the vertex maximizing $x\mapsto k_t^{w_{\lambda,\rho}}\left( x,x \right)$ on $V$. See \cref{alg:gaussian-process-landmarking} for a comprehensive description.

\begin{algorithm}
\caption{{\sc Gaussian Process Landmarking with Reweighted Heat Kernel}}
\label{alg:gaussian-process-landmarking}
 \begin{algorithmic}[1]
   \Procedure{GPL}{$T$, $L$, $\lambda\in \left[ 0,1 \right]$, $\rho>0$, $\epsilon>0$}\Comment{Triangular Mesh $T =\left( V,E \right)$, number of landmarks $L$}
   \State{$\kappa,\eta\gets$ {\sc DiscreteCurvatures}$\left( T \right)$}\Comment{calculate discrete Gaussian curvature $\kappa$ and mean curvature $\eta$ on $T$}
   \State{$\nu\gets$ {\sc VoronoiAreas}$\left( T \right)$}\Comment{calculate the area of Voronoi cells around each vertex $x_i$}
   \State{$w_{\lambda,\rho}\gets$ {\sc CalculateWeight}$\left(\kappa,\eta,\lambda,\rho,\nu\right)$}\Comment{calculate weight function $w_{\lambda,\rho}$ according to \cref{eq:weight-function-discrete}}
   \State{$W\gets\left[ \exp\left(-\left\| x_i-x_j \right\|^2\bigg/\epsilon\right) \right]_{1\leq i,j\leq \left| V \right|}\in\mathbb{R}^{\left| V \right|\times \left| V \right|}$}
   \State{$\Lambda\gets\mathrm{diag}\left( w_{\lambda,\rho}\left( x_1 \right)\nu \left( x_1 \right), \cdots, w_{\lambda,\rho}\left( x_{\left| V \right|} \right)\nu \left( x_{\left| V \right|} \right)\right)\in\mathbb{R}^{\left| V \right|\times \left| V \right|}$}
   \State{$\xi_1,\cdots,\xi_L\gets\emptyset$}\Comment{initialize landmark list}
   \State{$\Psi\gets 0$}
   \State{$\ell\gets 1$}
   \State{$K_{\mathrm{full}}\gets W^{\top}\Lambda W\in\mathbb{R}^{\left| V \right|\times \left| V \right|}$}
   \State{$K_{\mathrm{trace}}\gets \mathrm{diag}\left(K_{\mathrm{full}}\right)\in\mathbb{R}^{\left| V \right|}$}
   \While{$\ell < L+1$}
     \If {$\ell=1$}
       \State $\Sigma\gets K_{\mathrm{trace}}$
     \Else \Comment{calculate uncertainty scores by \cref{eq:uncertainty-score}}
       \State $b\gets$ {solve linear system} $\Psi \left[\left[\xi_1,\cdots,\xi_{\ell}\right],: \right]b = \Psi$
       \State $\Sigma\gets K_{\mathrm{trace}}-\textrm{diag}\left(\Psi^{\top} b\right)\in\mathbb{R}^{\left| V \right|}$
     \EndIf
     \State $\displaystyle\xi_{\ell}\gets \argmax\Sigma$
     \State $\Psi\gets K_{\mathrm{full}} \left[ :,\left[\xi_1,\cdots,\xi_{\ell}\right] \right]$
     \State $\ell\gets \ell+1$
   \EndWhile
 \State \Return $\xi_1,\cdots,\xi_L$
   \EndProcedure
 \end{algorithmic}
\end{algorithm}

\begin{remark}
  \cref{alg:gaussian-process-landmarking} can be easily adapted to work with point clouds (where connectivity information is not present) and in higher dimensional spaces, which makes it applicable to a much wider range of input data in geometric morphometrics as well as other applications; see e.g. \cite{GPLMK2}. For instance, it suffices to replace Step 4 of \cref{alg:gaussian-process-landmarking} with a different discrete curvature (or another type of ``importance score'') calculation procedure on point clouds (see e.g. \cite{Rusu_ICRA2011_PCL,CLMT2015}), and replace Step 5 with a nearest-neighbor weighted graph adjacency matrix construction. We require in this paper the inputs to be triangular meshes with edge connectivity only for the ease of statement, as computation of discrete curvatures on triangular meshes is much more straightforward.
\end{remark}

\begin{remark}
  \label{rem:rank-one-updates}
  Note that, according to \cref{eq:uncertainty-score}, each step adds only one new row and one new column to the inverse covariance matrix, which enables us to perform rank-$1$ updates to the covariance matrix according to the block matrix inversion formula (see e.g. \cite[\S A.3]{RW2006})
  \begin{equation*}
    K_n^{-1}=
    \begin{pmatrix}
      K_{n-1} & P\\
      P^{\top} & K \left( X_n,X_n \right)
    \end{pmatrix}^{-1}
    =
    \begin{pmatrix}
      K_{n-1}^{-1}\left( I_{n-1}+\mu PP^{\top}K_{n-1}^{-1} \right) & -\mu K_{n-1}^{-1}P\\
      -\mu P^{\top} K_{n-1}^{-1} & \mu
    \end{pmatrix}
  \end{equation*}
where
\begin{equation*}
  \begin{aligned}
    P &=
  \begin{pmatrix}
    K \left( X_1,X_n \right),\cdots,K \left( X_{n-1},X_n \right)
  \end{pmatrix}\in\mathbb{R}^{n-1},\\
  \mu &= \left( K \left( X_n,X_n \right)-P^{\top}K_{n-1}^{-1}P \right)^{-1}\in\mathbb{R}.
  \end{aligned}
\end{equation*}
This simple trick significantly improves the computational efficiency as it avoids directly inverting the covariance matrix when the number of landmarks becomes large as the iteration progresses.
\end{remark}

Before we delve into the theoretical aspects of \cref{alg:gaussian-process-landmarking}, let us present a few typical instances of this algorithm in practical use. A more comprehensive evaluation of the applicability of \cref{alg:gaussian-process-landmarking} to geometric morphometrics is deferred to \cite{GPLMK2}. In a nutshell, the Gaussian process landmarking algorithm picks the landmarks on the triangular mesh successively, according to the uncertainty score function $\Sigma$ at the beginning of each step; at the end of each step the uncertainty score function gets updated, with the information of the newly picked landmark incorporated into the inverse convariance matrix defined as in \cref{eq:gaussian-process-new-sample}. \cref{fig:B03-Lmk-Progression} illustrates the first few successive steps on a triangular mesh discretization of a fossil molar of primate \emph{Plesiadapoidea}. Empirically, we observed that the updates on the uncertainty score function are mostly local, i.e. no abrupt changes of the uncertainty score are observed away from a small geodesic neighborhood centered at each new landmark. Guided by uncertainty and curvature-reweighted covariance function, the Gaussian process landmarking often identifies landmarks of abundant biological information---for instance, the first Gaussian process landmarks are often highly biologically informative, and demonstrate comparable level of coverage with observer landmarks manually picked by human experts. See \cref{fig:Q10-ObLmk-vs-GPLmk} for a visual comparison between the automatically generated landmarks with the observer landmarks manually placed by evolutionary anthropologists on a different digitized fossil molar.

\begin{figure}[htp]
  \centering
  \includegraphics[width=1.0\textwidth]{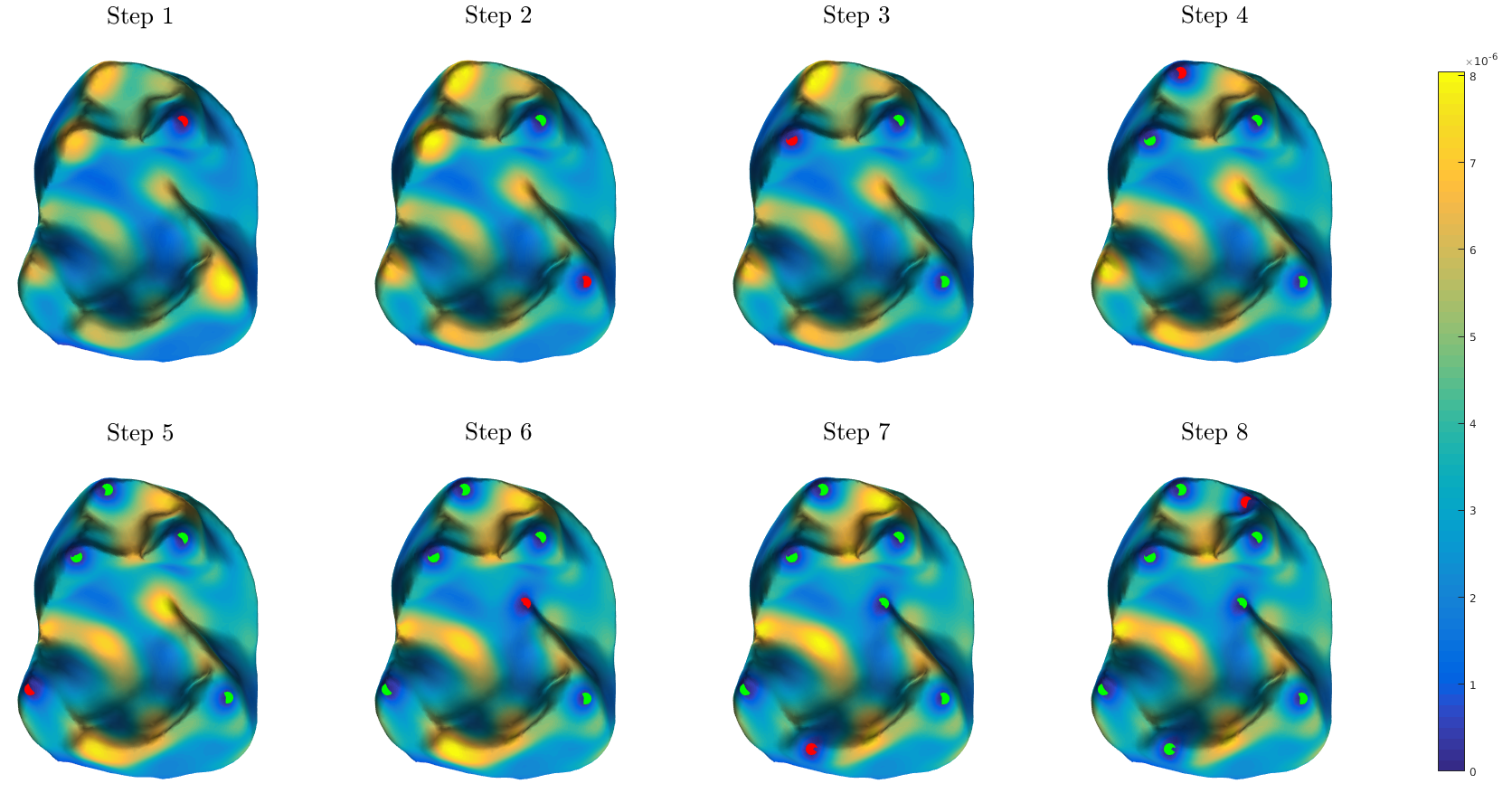}
  \caption{\small The first $8$ landmarks picked successively by Gaussian Process Landmarking (\cref{alg:gaussian-process-landmarking}) on a digitized fossil molar of \emph{Plesiadapoidea} (extinct mammals from the Paleocene and Eocene of North America, Europe, and Asia \cite{Plesiadapiform2017}), with the uncertainty scores at the end of each step rendered on the triangular mesh as a heat map. In each subfigure, the pre-existing landmarks are colored green, and the new landmark is colored red. At each step, the algorithm picks the vertex on the triangular mesh with the highest uncertainty score (computed according to \cref{eq:gaussian-process-new-sample}), then updates the score function.}
  \label{fig:B03-Lmk-Progression}
\end{figure}

\begin{figure}[htb]
  \centering
  \includegraphics[width=1.0\textwidth]{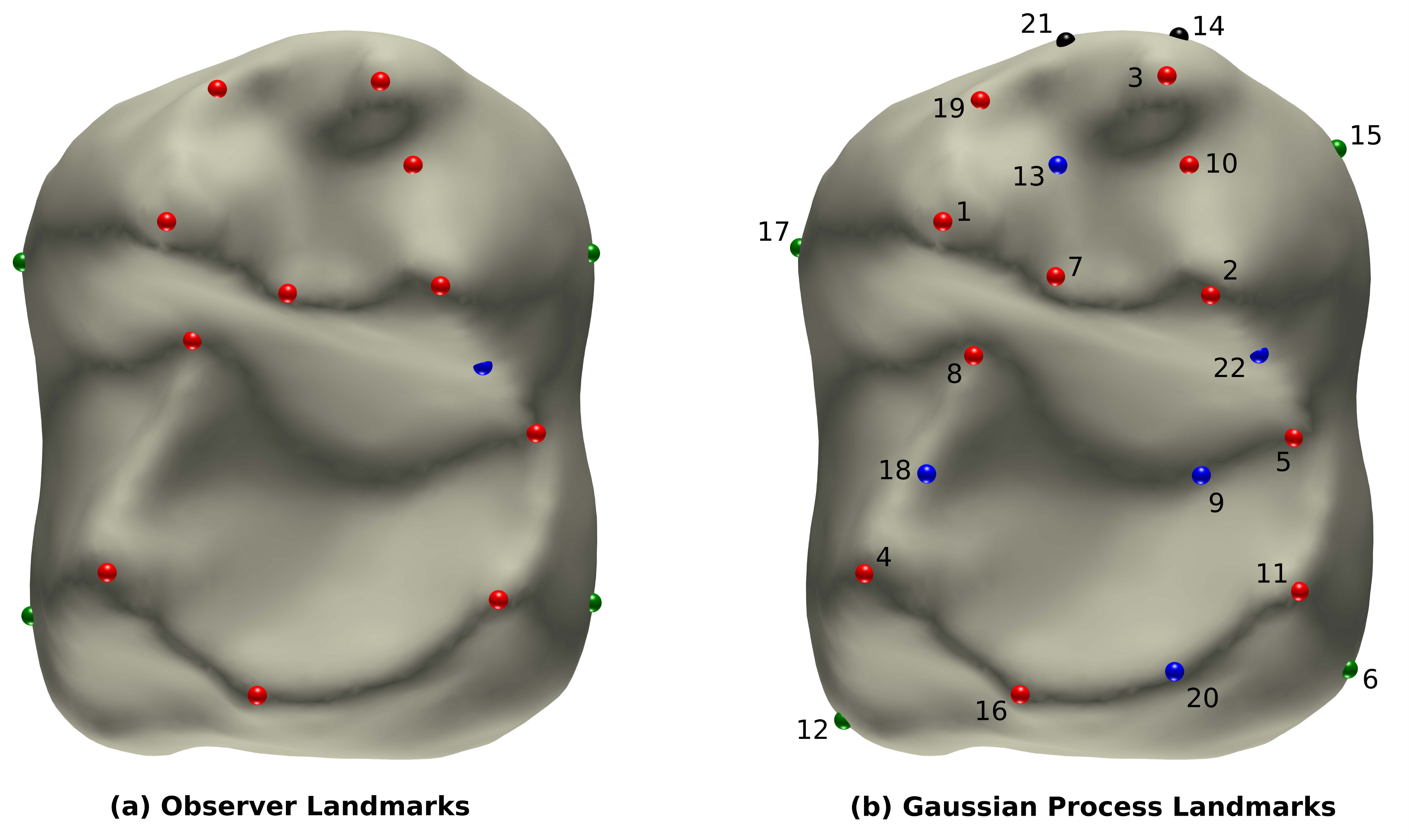}
  \caption{\small\textbf{Left:} Sixteen observer landmarks on a digitized fossil molar of a \emph{Teilhardina} (one of the oldest known fossil primates closely related with living tarsiers and anthropoids \cite{Teilhardina2017}) identified manually by evolutionary anthropologists as ground truth, first published in \cite{PNAS2011}. \textbf{Right:} The first $22$ landmarks picked by Gaussian Process Landmarking (\cref{alg:gaussian-process-landmarking}). The numbers next to each landmark indicate the order of appearance. The Gaussian process landmarks strikingly resembles the observer landmarks: the red landmarks (Number $1$-$5$, $7$, $8$, $10$, $11$, $16$, $19$) signal geometric sharp features (cusps or saddle points corresponding to local maximum/minimum Gaussian curvature); the blue landmarks sit either along the curvy cusp ridges and grooves (Number $13$, $18$, $20$, $22$) or at the basin (Number $9$), serving the role often played by \emph{semilandmarks} (c.f. \cite[\S2.1]{GPLMK2}); the four green landmarks (Number $6$, $12$, $15$, $17$) approximately delimit the ``outline'' of the tooth in occlusal view.}
  \label{fig:Q10-ObLmk-vs-GPLmk}
\end{figure}

\subsection{Numerical Linear Algebra Perspective}
\label{sec:numer-line-algebra}

\cref{alg:gaussian-process-landmarking} can be divided into $2$ phases: Line 1 to 10 focus on constructing the kernel matrix $K_{\mathrm{full}}$ from the geometry of the triangular mesh $M$; from Line 11 onward, only numerical linear algebraic manipulations are involved. In fact, the numerical linear algebra part of \cref{alg:gaussian-process-landmarking} is identical to Gaussian elimination (or LU decomposition) with a very particular ``diagonal pivoting'' strategy, which is different from standard full or partial pivoting in Gaussian elimination. To see this, first note that the variance $\Sigma_n \left( X \right)$ in \cref{eq:gaussian-process-new-sample} is just the diagonal of the Schur complement of the $n\times n$ submatrix of $K_{\mathrm{full}}$ corresponding to the $n$ previously chosen landmarks $X_1,\cdots,X_n$, and recall from \cite[Ex. 20.3]{TrefethenBau1997} that this Schur complement arises as the bottom-right $\left(\left| V \right|-n\right)\times \left(\left| V \right|-n\right)$ block after the $n$-th elimination step. The greedy criterion \cref{eq:active-learning-rule} then amounts to selecting the largest diagonal entry in this Schur complement as the next pivot. Therefore, the second phase of \cref{alg:gaussian-process-landmarking} can be consolidated into the form of a ``diagonal-pivoted'' LU decomposition, i.e. $K_{\mathrm{full}}P=LU$, in which the first $L$ columns of the permutation matrix $P$ reveals the location of the $L$ chosen landmarks. In fact, since the kernel matrix we choose is symmetric and positive semidefinite, the rank-$1$ updates in \cref{rem:rank-one-updates} most closely resembles the \emph{pivoted Cholesky decomposition} (see e.g. \cite[\S10.3]{Higham2002} or \cite{HPS2012}). Identical with these classical pivoting-based matrix decomposition algorithms, the time and space computational complexity of the main algorithm in \cref{alg:gaussian-process-landmarking} are thus $O \left( L^3 \right)$ and $O \left( n^2 \right)$, respectively, where $n$ is the total number of candidate points and $L$ is the desired number of parameters. Note that these complexities are both polynomial and comparable with those in the computer science literature \cite{BGS2010,Nikolov2015,WYS2017,ZLSW2017,ZLSW2017journal_version}. This numerical linear algebraic perspective motivates us to investigate variants of \cref{alg:gaussian-process-landmarking} based on other numerical linear algebraic algorithms with pivoting in future work.

\section{Rate of Convergence: Reduced Basis Methods in Reproducing Kernel Hilbert Spaces}
\label{sec:theo-analysis}

In this subsection we analyze the rate of convergence of our main Gaussian process landmarking algorithm in Section~\ref{sec:gauss-proc-active}. While the notion of ``convergence rate'' in the context of Gaussian process regression (i.e. kriging \cite{Molnar1985,Stein2012}) or scattered data approximation (see e.g. \cite{Wendland2004} and the references therein) refers to how fast the interpolant approaches the true function, our focus in this paper is the rate of convergence of \cref{alg:gaussian-process-landmarking} itself, i.e. the number of steps the algorithm takes before it terminates. In practice, unless a maximum number of landmarks is predetermined, a natural criterion for terminating the algorithm is to specify a threshold for the sup-norm of the prediction error \eqref{eq:prediction-error} [i.e. the variance \eqref{eq:active-learning-rule}] over the manifold. We emphasize again that, although this greedy approach is motivated by the probabilistic model of Gaussian processes, the MSPE is completely determined once the kernel function and the design points are specified, and so is the greedy algorithmic procedure. Our analysis is centered around bounding the uniform rate at which the pointwise MSPE function \eqref{eq:prediction-error} decays with respect to number landmarks greedily selected.


To this end, we observe the connection between \cref{alg:gaussian-process-landmarking} and a greedy algorithm studied thoroughly for \emph{reduced basis methods} in \cite{BCDDvPW2011,DvPW2013} in the context of model reduction. While the analyses in \cite{BCDDvPW2011,DvPW2013} assume general Hilbert and Banach spaces, we apply their result to a reproducing kernel Hilbert space (RKHS), denoted as $\mathscr{H}_K$, naturally associated with a Gaussian process $\GP \left( m,K \right)$; as will be demonstrated below, the MSPE with respect to $n$ selected landmarks can be interpreted as a distance function between elements of $\mathscr{H}_K$ to an $n$-dimensional subspace of $\mathscr{H}_K$ determined by the selected landmarks. We emphasize that, though the connection between Gaussian process and RKHS is well known (see e.g. \cite{vdVvZ2008} and the references therein), we are not aware of existing literature addressing the resemblance between the two classes of greedy algorithms widely used in Gaussian process experimental design and reduced basis methods.


We begin with a brief summary of the greedy algorithm in reduced basis methods for a general Banach space $\left( X,\left\| \cdot \right\| \right)$. The algorithm strives to approximate \emph{all} elements of $X$ using a properly constructed linear subspace spanned by (as few as possible) selected elements from a compact subset $\mathscr{F}\subset X$; thus the name ``reduced'' basis. A popular greedy algorithm for this purpose generates successive approximation spaces by choosing the first basis $f_1\in\mathscr{F}$ according to
\begin{equation}
\label{eq:reduced-basis-initial}
  f_1:=\argmax_{f\in\mathscr{F}}\left\| f \right\|
\end{equation}
and, successively, when $f_1,\cdots,f_{n-1}$ are picked already, choose
\begin{equation}
\label{eq:reduced-basis-induction}
  f_{n+1}:=\argmax_{f\in\mathscr{F}}\mathrm{dist}\left( f,V_n \right)
\end{equation}
where
\begin{equation*}
  V_n=\mathrm{span}\left\{ f_1,f_2,\cdots,f_n \right\}
\end{equation*}
and
\begin{equation*}
  \mathrm{dist}\left( f,V_n \right):=\inf_{g\in V_n}\left\| f-g \right\|.
\end{equation*}
In words, at each step we greedily pick the function that is ``farthest away'' from the set of already chosen basis elements. Intuitively, this is analogous to the \emph{farthest point sampling} (FPS) algorithm \cite{Gonzalez1985,MoenningDodgson2003} in Banach spaces, with a key difference in the choice of the distance between a point $p$ and a set of selected points $\left\{ q_1,\cdots q_n \right\}$: in FPS such a distance is defined as the maximum over all distances $\left\{\left\| p-q_i \right\|\mid 1\leq i\leq n\right\}$, whereas in reduced basis methods the distance is between $p$ and the linear subspace spanned by $\left\{ q_1,\cdots,q_n \right\}$.

Gaussian process landmarking algorithm fits naturally into the framework of reduced basis methods, as follows. Let us first specialize this construction to the case when $X$ is the reproducing kernel Hilbert space $\mathscr{H}_K\subset L^2 \left( M \right)$, where $M$ is a compact Riemannian manifold and $K$ is the reproducing kernel. A natural choice for $K$ is the heat kernel $k_t \left( \cdot,\cdot \right):M\times M\rightarrow\mathbb{R}$ with a fixed $t>0$ as in \Cref{sec:spectral-embedding}, but for a submanifold isometrically embedded into an ambient Euclidean space it is common as well to choose the kernel to be the restriction to $M$ of a positive (semi-)definite kernel in the ambient Euclidean space such as \cref{eq:sq-exp-kernel-submfld} or \cref{eq:sq-exp-kernel-submfld-reweighted}, for which Sobolev-type error estimates are known in the literature of scattered data approximation \cite{NWW2006,FW2012}. It follows from standard RKHS theory (see e.g. \eqref{eq:defn-rkhs-intro-reinterp}) that
\begin{equation}
\label{eq;defn-rkhs-manifold}
  \mathscr{H}_K=\overline{\mathrm{span}\left\{ \sum_{i\in I}a_i K \left( \cdot,x_i \right)\mid a_i\in\mathbb{R},x_i\in M,\mathrm{card}\left( I \right)<\infty \right\}}
\end{equation}
and, by the compactness of $M$ and the regularity of the kernel function, we have for any $x\in M$
\begin{equation*}
  \left\langle K \left( \cdot,x \right),K \left( \cdot,x \right) \right\rangle_{\mathscr{H}_K}=K \left( x,x \right)\leq \left\| K \right\|_{\infty,M\times M}<\infty
\end{equation*}
which justifies the compactness of
\begin{equation}
\label{eq:compact-reduced-basis-set-our-setting}
  \mathscr{F}:=\mathrm{span}\left\{ K \left( \cdot,x \right)\mid x\in M \right\}
\end{equation}
as a subset of $\mathscr{H}_K$, since $\mathscr{H}_K$ embeds into $L^2 \left( M \right)$ compactly \cite{Aronszajn1950,RBV2010}. In fact, since we only used the compactness of $M$ and the boundedness of $K$ on $M\times M$, the argument above for the compactness of $\mathscr{F}$ can be extended to any Gaussian process defined on a compact metric space with a bounded kernel. The initialization step \cref{eq:reduced-basis-initial} now amounts to selecting $K \left( \cdot,x \right)$ from $\mathscr{F}$ that maximizes
$$\left\| K \left( \cdot,x \right) \right\|_{\mathscr{H}_K}^2=\left\langle K \left( \cdot,x \right),K \left( \cdot,x \right) \right\rangle_{\mathscr{H}_K}=K \left( x,x \right)$$
which is identical to \cref{eq:active-learning-rule} when $n=1$ (or equivalently, Line 14 in \cref{alg:gaussian-process-landmarking}); furthermore, given $n\geq 1$ previously selected basis functions $K \left( \cdot,x_1 \right),\cdots,K \left( \cdot,x_n \right)$, the $\left(n+1\right)$-th basis function will be chosen according to \cref{eq:reduced-basis-induction}, i.e. $f_{n+1}=K \left( \cdot,x_n \right)$ maximizes the infimum
\begin{align}
    \inf_{g\in \mathrm{span}\left\{ K \left( \cdot,x_1 \right),\cdots,K \left( \cdot,x_n \right) \right\}} &\left\| K \left( \cdot,x \right)-g \right\|_{\mathscr{H}_K}^2=\inf_{a_1,\cdots,a_n\in\mathbb{R}} \left\| K \left( \cdot,x \right)-\sum_{i=1}^na_i K \left( \cdot,x_i \right) \right\|_{\mathscr{H}_K}^2\nonumber\\
    =&\inf_{a_1,\cdots,a_n\in\mathbb{R}} K \left( x,x \right)-2\sum_{i=1}^na_iK \left( x,x_i \right)+\sum_{i=1}^n\sum_{j=1}^na_ia_j K \left( x_i,x_j \right)\nonumber\\
   \stackrel{\left( * \right)}{=}&K \left( x,x \right)-K \left( x,x_n^1 \right)^{\top}
    K_{n,n}^{-1}
    K \left( x,x_n^1 \right)\label{eq:maximizing-infimum}
\end{align}
where the notation are as in \cref{eq:simplify-notation-vector} and \cref{eq:simplify-notation-matrix}, i.e.
\begin{equation*}
  K \left( x,x_n^1 \right):=\begin{pmatrix}
      K \left( x,x_1 \right)\\
      \vdots\\
      K \left( x,x_n \right)
    \end{pmatrix},\quad K_{n,n}:=\begin{pmatrix}
      K \left( x_1,x_1 \right)&\cdots &K \left( x_1,x_n \right)\\
      \vdots & & \vdots\\
      K \left( x_n,x_1 \right)&\cdots&K \left( x_n,x_n \right)
    \end{pmatrix}.
\end{equation*}
The equality $\left( * \right)$ follows from the observation that, for any fixed $x\in M$, the minimizing vector $\mathbf{a}:=\left( a_1,\cdots,a_n \right)^{\top}\in\mathbb{R}^n$ satisfies
\begin{equation*}
  \begin{aligned}
    K \left( x,x_n^1 \right)&=
  K_{n,n}\mathbf{a} \quad \Leftrightarrow\quad 
  \mathbf{a} =
  K_{n,n}^{-1}
   K \left( x,x_n^1 \right).
  \end{aligned}
\end{equation*}
It is clear at this point that maximizing the rightmost quantity in \cref{eq:maximizing-infimum} is equivalent to following the greedy landmark selection criterion \cref{eq:active-learning-rule} at the $\left( n+1 \right)$-th step. We thus conclude that \cref{alg:gaussian-process-landmarking} is equivalent to the greedy algorithm for reduced basis method in $\mathscr{H}_K$, a reproducing kernel Hilbert space modeled on the compact manifold $M$. The following lemma summarizes this observation for future reference.
\begin{lemma}
\label{lem:basic-project-estimates}
  Let $M$ be a compact Riemannian manifold, and let $K:M\times M\rightarrow\mathbb{R}$ be a positive semidefinite kernel function. Consider the reproducing kernel Hilbert space $\mathscr{H}_K\subset L^2 \left( M \right)$ as defined in \cref{eq;defn-rkhs-manifold}. For any $x\in M$ and a collection of $n$ points $\mathscr{X}_n=\left\{x_1,x_2,\cdots,x_n\right\}\subset M$, the orthogonal projection $P_n$ from $\mathscr{H}_K$ to $V_n=\mathrm{span}\left\{ K \left( \cdot,x_i \right)\mid 1\leq i\leq n \right\}$ is
  \begin{equation*}
    P_n\left(K \left( \cdot,x \right)\right) = \sum_{i=1}^na_i^{*}\left( x \right)K \left( \cdot,x_i \right)
  \end{equation*}
where $a_i^{*}:M\rightarrow \mathbb{R}$ is the inner product of vector $\left( K \left( x,x_1 \right),\cdots,K \left( x,x_n \right) \right)$ with the $i$-th row of
\begin{equation*}
  \begin{pmatrix}
      K \left( x_1,x_1 \right)&\cdots &K \left( x_1,x_n \right)\\
      \vdots & & \vdots\\
      K \left( x_n,x_1 \right)&\cdots&K \left( x_n,x_n \right)
   \end{pmatrix}^{-1}.
\end{equation*}
In particular, $a_i^{*}$ has the same regularity as the kernel $\Phi$, for all $1\leq i\leq n$. Moreover, the squared distance between $K \left( \cdot,x \right)$ and the linear subspace $V_n\subset\mathscr{H}_K$ has the closed-form expression
\begin{equation}
\label{eq:dist-between-pt-to-plane}
\begin{aligned}
   P_{K,\mathscr{X}_n}\left( x \right):&=\left\| K \left( \cdot,x \right)-P_n \left( K \left( \cdot,x \right) \right) \right\|^2_{\mathscr{H}_K}\\
   &=\min_{a_1,\cdots,a_n\in\mathbb{R}}\left\| K \left( \cdot,x \right)-\sum_{i=1}^na_i K \left( \cdot,x_i \right) \right\|_{\mathscr{H}_K}^2\\
   &= K \left( x,x \right)-K \left( x,x_n^1 \right)^{\top}
    \begin{pmatrix}
      K \left( x_1,x_1 \right)&\cdots &K \left( x_1,x_n \right)\\
      \vdots & & \vdots\\
      K \left( x_n,x_1 \right)&\cdots&K \left( x_n,x_n \right)
    \end{pmatrix}^{-1}
    K \left( x,x_n^1 \right)
\end{aligned}
\end{equation}
where
\begin{equation*}
  K \left( x,x_n^1 \right):=\left( K \left( x,x_1 \right), \cdots, K \left( x,x_n \right) \right)^{\top}\in\mathbb{R}^n.
\end{equation*}
Consequently, for any Gaussian process defined on $M$ with covariance structure given by the kernel function $K$, the MSPE of the Gaussian process conditioned on the observations at $x_1,\cdots,x_n\in M$ equals to the distance between $K \left( \cdot,x \right)$ and the subspace $V_n$ spanned by $K \left( \cdot,x_1 \right),\cdots,K \left( \cdot,x_n \right)$.
\end{lemma}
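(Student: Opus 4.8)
The plan is to use nothing beyond the reproducing property $\langle f, K(\cdot,x)\rangle_{\mathscr{H}_K} = f(x)$, which in particular gives $\langle K(\cdot,x), K(\cdot,y)\rangle_{\mathscr{H}_K} = K(x,y)$ for the kernel sections that span $V_n$. First I would characterize the orthogonal projection $P_n(K(\cdot,x))$ by the defining property that the residual $K(\cdot,x) - P_n(K(\cdot,x))$ is orthogonal to every generator $K(\cdot,x_i)$ of $V_n$. Writing the (necessarily $V_n$-valued) projection as $\sum_j a_j K(\cdot,x_j)$ and imposing orthogonality, the reproducing property collapses each condition $\langle K(\cdot,x) - \sum_j a_j K(\cdot,x_j), K(\cdot,x_i)\rangle_{\mathscr{H}_K} = 0$ into the scalar normal equation $K(x,x_i) - \sum_j a_j K(x_j,x_i) = 0$. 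Collecting these over $1 \le i \le n$ yields the linear system $K_{n,n}\mathbf{a} = K(x,x_n^1)$, whose solution $\mathbf{a}^{*} = K_{n,n}^{-1} K(x,x_n^1)$ identifies $a_i^{*}(x)$ as the inner product of the evaluation vector with the $i$-th row of $K_{n,n}^{-1}$, exactly as claimed.

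The regularity claim then comes for free: each $a_i^{*}$ is a fixed linear combination, with coefficients from the constant matrix $K_{n,n}^{-1}$, of the functions $x \mapsto K(x,x_j)$, so it inherits whatever smoothness $K$ enjoys in its first argument. For the squared-distance formula I would invoke the Pythagorean identity for orthogonal projections, or equivalently use that the residual is orthogonal to $P_n(K(\cdot,x)) \in V_n$, to write $\left\| K(\cdot,x) - P_n(K(\cdot,x)) \right\|^2_{\mathscr{H}_K} = \left\langle K(\cdot,x) - P_n(K(\cdot,x)), K(\cdot,x) \right\rangle_{\mathscr{H}_K}$. Expanding this inner product and applying the reproducing property once more turns it into $K(x,x) - \sum_i a_i^{*}(x) K(x_i,x) = K(x,x) - K(x,x_n^1)^{\top} K_{n,n}^{-1} K(x,x_n^1)$, which is precisely \cref{eq:dist-between-pt-to-plane}. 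The final assertion is then immediate, since this closed form coincides verbatim with the MSPE expression \cref{eq:prediction-error} and with the rightmost quantity already derived in \cref{eq:maximizing-infimum}, so no further computation is needed.

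The one point that needs genuine care, rather than routine algebra, is the invertibility of $K_{n,n}$ implicitly assumed when writing $K_{n,n}^{-1}$: for a merely positive semidefinite (not strictly positive definite) kernel, or for non-distinct points, the Gram matrix may be singular. I would address this by noting that the orthogonal projection $P_n$ onto the closed finite-dimensional subspace $V_n$ is always well-defined regardless of the conditioning of $K_{n,n}$, and that the normal equations $K_{n,n}\mathbf{a} = K(x,x_n^1)$ are always consistent, since the right-hand side lies in the column space of $K_{n,n}$. One then either restricts to the generic case of linearly independent sections $K(\cdot,x_i)$ --- which the greedy criterion \cref{eq:active-learning-rule} enforces automatically, because any point already lying in the span would yield zero marginal variance and would never be selected --- or replaces $K_{n,n}^{-1}$ by the Moore--Penrose pseudoinverse without altering the value of the projection or of the distance. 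Modulo this caveat, the whole argument is a direct translation of the finite-dimensional least-squares normal equations into RKHS language.
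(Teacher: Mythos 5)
Your argument is essentially the paper's own: the paper derives the identical formula in the displayed computation \eqref{eq:maximizing-infimum} by minimizing the quadratic $\left\| K \left( \cdot,x \right)-\sum_i a_i K \left( \cdot,x_i \right) \right\|_{\mathscr{H}_K}^2$ via the reproducing property, which is exactly your normal-equations derivation in different clothing, and the Pythagorean step is the same collapse of cross terms. Your closing observation about the possible singularity of $K_{n,n}$ and the pseudoinverse fix is a worthwhile refinement that the paper glosses over, but it does not change the substance of the argument.
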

The function $P_{K,\mathscr{X}_n}:M\rightarrow\mathbb{R}_{\geq 0}$ defined in \cref{eq:dist-between-pt-to-plane} is in fact the squared \emph{power function} in the literature of scattered data approximation; see e.g. \cite[Definition 11.2]{Wendland2004}.


The convergence rate of greedy algorithms for reduced basis methods has been investigated in a series of works \cite{BMPPT2012,BCDDvPW2011,DvPW2013}. The general paradigm is to compare the maximum approximation error incurring after the $n$-th greedy step, denoted as
\begin{equation}
\label{eq:defn-sigma}
  \sigma_n:=\mathrm{dist}\left( f_{n+1},V_n \right)=\max_{f\in\mathscr{F}}\mathrm{dist}\left( f,V_n \right),
\end{equation}
with the \emph{Kolmogorov width} (c.f. \cite{LvGM1996}), a quantity characterizing the theoretical optimal error of approximation using any $n$-dimensional linear subspace generated from any greedy or non-greedy algorithms, defined as
\begin{equation}
\label{eq:defn-dn}
  d_n:=\inf_Y\sup_{f\in \mathscr{F}}\mathrm{dist}\left( f,Y \right)
\end{equation}
where the first infimum is taken over all $n$-dimensional subspaces $Y$ of $X$. When $n=1$, both $\sigma_1$ and $d_1$ reduce to the $\infty$-bound of the kernel function on $M\times M$, i.e. $\left\| K \right\|_{\infty,M\times M}$. Note that by definitions \eqref{eq:defn-sigma} \eqref{eq:defn-dn} both sequences $\left\{ \sigma_n\mid n\in\mathbb{N} \right\}$ and $\left\{ d_n\mid n\in\mathbb{N} \right\}$ are monotonically non-decreasing, since $V_1\subsetneqq V_2 \subsetneqq\cdots$; see also \cite[\textsection 1.3]{BCDDvPW2011}. In \cite{DvPW2013} the following comparison between $\left\{ \sigma_n\mid n\in\mathbb{N} \right\}$ and $\left\{ d_n\mid n\in\mathbb{N} \right\}$ was established:
\begin{theorem}[\cite{DvPW2013}, Theorem 3.2 (The $\gamma=1$ Case)]
\label{eq:greedy-reduced-basis-theorem}
  For any $N\geq 0$, $n\geq 1$, and $1 \leq m < n$, there holds
  \begin{equation*}
    \prod_{\ell=1}^n\sigma_{N+\ell}^2\leq \left( \frac{n}{m} \right)^m \left( \frac{n}{n-m} \right)^{n-m}\sigma_{N+1}^{2m}d_m^{2n-2m}.
  \end{equation*}
\end{theorem}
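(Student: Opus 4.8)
The plan is to recast the statement as a single determinant estimate for a lower-triangular matrix built from the greedily chosen elements, and then to bound that determinant from above in two complementary ways: one coming from the greedy rule \eqref{eq:reduced-basis-induction}, the other from the Kolmogorov width \eqref{eq:defn-dn}. First I would run Gram--Schmidt on the greedy sequence $f_1,f_2,\dots$ to obtain an orthonormal system $e_1,e_2,\dots$ with $V_n=\mathrm{span}\{e_1,\dots,e_n\}$, where $e_n$ is the unit residual of $f_n$ modulo $V_{n-1}$. Collecting the coordinates $a_{ij}:=\langle f_i,e_j\rangle_{\mathscr{H}_K}$ yields an infinite lower-triangular array whose diagonal entries are exactly the greedy errors, $a_{ii}=\mathrm{dist}(f_i,V_{i-1})=\sigma_{i-1}$. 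Restricting to the diagonal block on indices $N+2,\dots,N+n+1$ gives an $n\times n$ lower-triangular matrix $G$ with diagonal $\sigma_{N+1},\dots,\sigma_{N+n}$, so that $\prod_{\ell=1}^n\sigma_{N+\ell}^2=(\det G)^2$ and the theorem reduces to bounding $|\det G|$, equivalently the volume of the parallelepiped spanned by the residuals $g_k:=f_{N+1+k}-P_{V_{N+1}}f_{N+1+k}$ for $1\le k\le n$.

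Next I would extract the two structural facts that feed the estimate. From the greedy rule \eqref{eq:reduced-basis-induction} one has, for every $k\ge i$, the identity $\sum_{j=i}^{k}a_{kj}^2=\mathrm{dist}(f_k,V_{i-1})^2\le\sigma_{i-1}^2$; in particular each residual in the window satisfies $\|g_k\|=\mathrm{dist}(f_{N+1+k},V_{N+1})\le\sigma_{N+1}$, using the monotonicity of $\{\sigma_n\}$. This furnishes the $m$ ``large'' factors. From \eqref{eq:defn-dn} I would fix a near-optimal $m$-dimensional subspace $Y\subset\mathscr{H}_K$ with $\sup_{f\in\mathscr{F}}\mathrm{dist}(f,Y)\le d_m+\varepsilon$; projecting $Y$ into $V_{N+1}^{\perp}$ produces an at-most-$m$-dimensional space $\tilde Y$, and since $V_{N+1}\subset Y+V_{N+1}$ one checks $\mathrm{dist}(g_k,\tilde Y)=\mathrm{dist}(f_{N+1+k},Y+V_{N+1})\le\mathrm{dist}(f_{N+1+k},Y)\le d_m$. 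Thus every residual lies within $d_m$ of the fixed $m$-dimensional space $\tilde Y$, which will furnish the $n-m$ ``small'' factors.

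I would then combine the two facts through the singular values of $G$ (equivalently, by multilinearity of the determinant). Writing $g_k=P_{\tilde Y}g_k+r_k$ with $\|r_k\|\le d_m$ exhibits $G$ as a rank-$\le m$ matrix plus a perturbation all of whose rows have norm $\le d_m$: the top $m$ singular values $s_1(G)\ge\cdots\ge s_m(G)$ are then controlled by $\sigma_{N+1}$, while Weyl's inequality forces the remaining $s_{m+1}(G),\dots,s_n(G)$ to be controlled by $d_m$, so that $|\det G|=\prod_i s_i(G)$ splits into $m$ factors governed by $\sigma_{N+1}$ and $n-m$ factors governed by $d_m$. The same split appears if instead one expands $\det G$ by multilinearity over $g_k=P_{\tilde Y}g_k+r_k$: every term that uses more than $m$ of the vectors $P_{\tilde Y}g_k$ vanishes, since those vectors are linearly dependent inside the $m$-dimensional space $\tilde Y$, leaving only configurations with $m$ ``$\sigma_{N+1}$''-rows and $n-m$ ``$d_m$''-rows. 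Counting such configurations contributes a binomial factor, and the stated constant emerges through the elementary inequality $\binom{n}{m}\le\bigl(\tfrac{n}{m}\bigr)^{m}\bigl(\tfrac{n}{n-m}\bigr)^{n-m}$; squaring returns the claimed bound $\prod_{\ell=1}^n\sigma_{N+\ell}^2\le\bigl(\tfrac{n}{m}\bigr)^{m}\bigl(\tfrac{n}{n-m}\bigr)^{n-m}\sigma_{N+1}^{2m}d_m^{2(n-m)}$.

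The delicate part --- and precisely where the argument of \cite{DvPW2013} does its real work --- is obtaining the \emph{sharp} combinatorial constant rather than a lossier one. A naive Hadamard bound on each surviving minor, or a crude Frobenius bound on the operator norms feeding Weyl's inequality, injects spurious factors of $\sqrt{n}$ per direction and leaves a superfluous $\binom{n}{m}^2$ or polynomial-in-$n$ overhead; squeezing this down to exactly $\bigl(\tfrac{n}{m}\bigr)^{m}\bigl(\tfrac{n}{n-m}\bigr)^{n-m}$ demands an essentially optimal volume estimate for matrices that are simultaneously diagonally dominant in the greedy sense and approximately of rank $m$. The secondary hurdles are bookkeeping: choosing the index shift so that the window diagonal is precisely $\sigma_{N+1},\dots,\sigma_{N+n}$ and the residual norms are bounded by $\sigma_{N+1}$ rather than $\sigma_N$, and correctly invoking the monotonicity of $\{\sigma_n\}$ and $\{d_n\}$. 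Finally, the $\gamma=1$ hypothesis is what makes the greedy residual available with constant $1$ in the first fact above, so that no weak-greedy factor degrades the estimate.
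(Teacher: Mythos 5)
The paper does not prove this statement---it imports it verbatim from \cite{DvPW2013}---so the relevant comparison is with the proof given there. Your setup reproduces that proof's architecture faithfully: the lower-triangular Gram--Schmidt array with $a_{ii}=\sigma_{i-1}$, the reduction of $\prod_{\ell=1}^n\sigma_{N+\ell}^2$ to $(\det G)^2$ for the $n\times n$ window, the two structural facts ($\|g_k\|\le\sigma_{N+1}$ from the greedy rule, $\mathrm{dist}(g_k,\tilde Y)\le d_m$ from a near-optimal Kolmogorov subspace pushed into $V_{N+1}^{\perp}$ and then into the coordinate space of the window), and the role of $\gamma=1$. All of that is correct.

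The genuine gap is exactly the step you flag yourself: neither of your two proposed mechanisms for combining the facts yields the stated constant, and a proof that delivers $n^{n}$ or $\binom{n}{m}^2$ in place of $\left(\tfrac{n}{m}\right)^m\left(\tfrac{n}{n-m}\right)^{n-m}$ does not prove the theorem. The missing ingredient is the volume lemma of \cite{DvPW2013} (their Lemma~2.1): for any $m$-dimensional subspace $W\subset\mathbb{R}^n$ with orthogonal projectors $P$ onto $W$ and $Q=I-P$,
\begin{equation*}
(\det G)^2\;\le\;\Bigl(\tfrac{1}{m}\sum_{i=1}^n\|Pg_i\|^2\Bigr)^{m}\Bigl(\tfrac{1}{n-m}\sum_{i=1}^n\|Qg_i\|^2\Bigr)^{n-m}.
\end{equation*}
This is not obtained by Weyl or by multilinear expansion but by a two-line argument you do not supply: write $(\det G)^2=\det\bigl(G[\Phi\;\Psi]\bigr)^2$ for orthonormal bases $\Phi$ of $W$ and $\Psi$ of $W^{\perp}$, apply Fischer's inequality to the Gram matrix of the column blocks to get $(\det G)^2\le\det(\Phi^{\top}G^{\top}G\Phi)\,\det(\Psi^{\top}G^{\top}G\Psi)$, and then bound each block determinant by $(\mathrm{tr}/\mathrm{dim})^{\mathrm{dim}}$ via AM--GM, noting $\mathrm{tr}(\Phi^{\top}G^{\top}G\Phi)=\sum_i\|Pg_i\|^2\le n\sigma_{N+1}^2$ and $\mathrm{tr}(\Psi^{\top}G^{\top}G\Psi)=\sum_i\|Qg_i\|^2\le n\,d_m^2$. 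The constant $\left(\tfrac{n}{m}\right)^m\left(\tfrac{n}{n-m}\right)^{n-m}$ is precisely the ratio $\tfrac{n^n}{m^m(n-m)^{n-m}}$ produced by this trace-averaging, not a binomial count; without this lemma your argument establishes only a strictly weaker inequality.
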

This result can be used to establish a direct comparison between the performance of greedy and optimal basis selection procedures. For instance, setting $N=0$ and taking advantage of the monotonicity of the sequence $\left\{ \sigma_n\mid n\in\mathbb{N} \right\}$, one has from \cref{eq:greedy-reduced-basis-theorem} that
\begin{equation*}
  \sigma_n\leq \sqrt{2}\min_{1\leq m < n}\left\| K \right\|_{\infty,M\times M}^{\frac{m}{n}}d_m^{\frac{n-m}{n}}
\end{equation*}
for all $n\in\mathbb{N}$. Using the monotonicity of $\left\{ \sigma_{n}\mid n\in\mathbb{N} \right\}$, by setting $m=\lfloor n/2 \rfloor$ we have the even more compact inequality
\begin{equation}
\label{eq:reduced-basis-greedy-rate-double-index}
  \sigma_n\leq \sqrt{2} \left\| K \right\|_{\infty,M\times M}^{\frac{1}{2}}d^{\frac{1}{2}}_{\lfloor n/2 \rfloor}\qquad \textrm{for all $n\in\mathbb{N}$, $n\geq 2$.}
\end{equation}
If we have a bound for $\left\{ d_n\mid n\in\mathbb{N} \right\}$, inequality \cref{eq:reduced-basis-greedy-rate-double-index} can be directly invoked to establish a bound for $\left\{ \sigma_n\mid n\in\mathbb{N} \right\}$, at the expense of comparing $\sigma_n$ with $d_{2n}$; in the regime $n\rightarrow\infty$ we may even expect the same rate of convergence at the expense of a larger constant. We emphasize here that the definition of $\left\{ d_n\mid n\in\mathbb{N} \right\}$ only involves elements in a compact subset $\mathscr{F}$ of the ambient Hilbert space $\mathscr{H}_K$; in our setting, the compact subset \cref{eq:compact-reduced-basis-set-our-setting} consists of only functions of the form $K \left( \cdot,x \right)$ for some $x\in M$, thus
\begin{equation}
  \label{eq:kolmogorov-width-gpr}
  \begin{aligned}
    d_n&=\inf_{x_1,\cdots,x_n\in M}\sup_{x\in M}\mathrm{dist}\left( K \left( \cdot,x \right),\mathrm{span}\left\{ K \left( \cdot,x_i \right)\mid 1\leq i\leq n \right\} \right)\\
    &=\inf_{x_1,\cdots,x_n\in M} \sup_{x\in M} \left[K \left( x,x \right)-K \left( x,x_n^1 \right)^{\top}
    K_{n,n}^{-1}
    K \left( x,x_n^1 \right)\right].
  \end{aligned}
\end{equation}
To ease notation, we will always denote $\mathscr{X}_n:=\left\{ x_1,\cdots,x_n \right\}$ as in \cref{lem:basic-project-estimates}. Write the maximum value of the function $P_{K,\mathscr{X}_n}$ over $M$ as
\begin{equation}
\label{eq:defn-power-function}
\begin{aligned}
  \Pi_{K,\mathscr{X}_n}&:=\sup_{x\in M}P_{K,\mathscr{X}_n}\left( x \right).
\end{aligned}
\end{equation}
The Kolmogorov width $d_n$ can be put in these notations as
\begin{equation}
\label{eq:kolmogorov-width-as-inf-power-function}
  d_n=\inf_{x_1,\cdots,x_n\in M}\Pi_{K,\mathscr{X}_n}.
\end{equation}
The problem of bounding $\left\{ d_n\mid n\in\mathbb{N} \right\}$ thus reduces to bounding the infimum of $\Pi_{K,\mathscr{X}_n}$ over all $n$-dimensional linear subspaces of $\mathscr{F}$.


When $M$ is an open, bounded subset of a standard Euclidean space,
upper bounds for $\Pi_{K,\mathscr{X}_n}$ are often established---in a kernel-adaptive fashion---using the \emph{fill distance} \cite[Chapter 11]{Wendland2004}
\begin{equation}
  \label{eq:defn-fill-distance}
  h_{\mathscr{X}_n}:=\sup_{x\in M}\min_{x_j\in\mathscr{X}_n}\left\| x-x_j \right\|
\end{equation}
where $\left\| \cdot \right\|$ is the Euclidean norm of the ambient space. For instance, when $K$ is a squared exponential kernel \cref{eq:sq-exp-kernel-submfld} and the domain is a cube (or more generally, the domain should at least be compact and convex, as pointed out in \cite[Theorem 1]{WTW2017}) in a Euclidean space, \cite[Theorem 11.22]{Wendland2004} asserts that
\begin{equation}
  \label{eq:wendland-squared-exponential-kernel-estimates}
  \Pi_{K,\mathscr{X}_n}\leq \exp\left[c \frac{\log h_{\mathscr{X}_n}}{h_{\mathscr{X}_n}}\right]\qquad\forall h_{\mathscr{X}_n}\leq h_0
\end{equation}
for some constants $c>0$, $h_0>0$ depending only on $M$ and the kernel bandwidth $t>0$ in \cref{eq:sq-exp-kernel-submfld}. Similar bounds have been established in \cite{WuSchaback1993} for Mat{\'e}rn kernels, but the convergence rate is only polynomial. In this case, by the monotonicity of the function $x\mapsto \log x/x$ for $x\in \left( 0,e \right)$, we have, for all sufficiently small $h_{\mathscr{X}_n}$,
\begin{equation*}
  d_n=\inf_{x_1,\cdots,x_n\in M}\Pi_{K,\mathscr{X}_n}\leq\exp \left[ c \frac{\log h_n}{h_n} \right]
\end{equation*}
where
\begin{equation}
\label{eq:defn-min-fill-dist}
  h_n:=\inf_{\mathscr{X}_n\subset M,\,\,\left| \mathscr{X}_n \right|=n}h_{\mathscr{X}_n}
\end{equation}
is the minimum fill distance attainable for any $n$ sample points on $M$. We thus have the following theorem for the convergence rate of \cref{alg:gaussian-process-landmarking} for any compact, convex set in a Euclidean space:
\begin{theorem}
  \label{thm:greedy-conditional-covariance-decay-convex}
Let $\Omega\subset \mathbb{R}^D$ be a compact and convex subset of the $D$-dimensional Euclidean space, and consider a Gaussian process $\GP \left( m,K \right)$ defined on $\Omega$, with the covariance function $K$ being of the squared exponential form \eqref{eq:sq-exp-kernel-submfld} with respect to the ambient $D$-dimensional Euclidean distance. Let $X_1,X_2,\cdots,$ denote the sequence of landmarks greedily picked on $\Omega$ according to \cref{alg:gaussian-process-landmarking}, and define for any $n\in\mathbb{N}$ the maximum MSPE on $\Omega$ with respect to the first $n$ landmarks $X_1,\cdots,X_n$ as
\begin{equation*}
  \sigma_n=\max_{x\in \Omega}\left[ K \left( x,x \right)-K \left( x,X^1_n \right)^{\top}K_n^{-1} K \left( x,X_n^1 \right)\right]
\end{equation*}
where the notations $K \left( x,X_n^1 \right)$ and $K_n$ are defined in \Cref{sec:gauss-proc-active}.
Then
\begin{equation}
\label{eq:convergence-rate-in-fill-distance}
  \sigma_n=O \left( \beta^{\frac{\log h_{\lfloor n/2\rfloor}}{h_{\lfloor n/2\rfloor}}} \right)\qquad\textrm{as $n\rightarrow\infty$}
\end{equation}
for some positive constant $\beta>1$ depending only on the geometry of the domain $\Omega$ and the bandwidth of the squared exponential kernel $K$; $h_n$ is the minimum fill distance of $n$ arbitrary points on $\Omega$ (c.f. \cref{eq:defn-min-fill-dist}).
\end{theorem}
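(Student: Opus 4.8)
The plan is to read \cref{thm:greedy-conditional-covariance-decay-convex} as the composition of three facts already assembled in the excerpt, supplying only the bookkeeping that glues them together. The three facts are: (i) that \cref{alg:gaussian-process-landmarking} is literally the reduced-basis greedy algorithm run in the RKHS $\mathscr{H}_K$; (ii) the comparison inequality \cref{eq:reduced-basis-greedy-rate-double-index} (a corollary of \cref{eq:greedy-reduced-basis-theorem}) between the greedy error and the Kolmogorov width taken at half the index; and (iii) Wendland's fill-distance estimate \cref{eq:wendland-squared-exponential-kernel-estimates} for the squared-exponential power function on a compact convex domain.

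First I would invoke the computation preceding \cref{lem:basic-project-estimates} to identify the objects. By that lemma the maximal MSPE $\sigma_n$ appearing in the theorem equals $\Pi_{K,\mathscr{X}_n}=\sup_{x\in\Omega}P_{K,\mathscr{X}_n}(x)$ evaluated on the greedy configuration $\mathscr{X}_n=\{X_1,\dots,X_n\}$, and the selection rule \cref{eq:active-learning-rule} coincides verbatim with the reduced-basis rule \cref{eq:reduced-basis-induction}. Hence the error sequence produced by \cref{alg:gaussian-process-landmarking} is exactly the sequence to which \cref{eq:reduced-basis-greedy-rate-double-index} applies. After matching squaring conventions, this bounds $\sigma_n$ by a constant multiple of the Kolmogorov width $d_{\lfloor n/2\rfloor}$ of \cref{eq:kolmogorov-width-as-inf-power-function}:
\[
\sigma_n \;\le\; C\,d_{\lfloor n/2\rfloor}, \qquad C=C\!\left(\|K\|_{\infty,\Omega\times\Omega}\right),
\]
which reduces the problem to bounding $d_{\lfloor n/2\rfloor}$.

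Second I would bound $d_m$ for $m=\lfloor n/2\rfloor$. Since $d_m=\inf_{\mathscr{X}_m}\Pi_{K,\mathscr{X}_m}$, it suffices to exhibit one good configuration. Because $\Omega$ is compact and convex, \cref{eq:wendland-squared-exponential-kernel-estimates} applies to every configuration with fill distance at most $h_0$, giving $\Pi_{K,\mathscr{X}_m}\le\exp[c\,\log h_{\mathscr{X}_m}/h_{\mathscr{X}_m}]$; taking the infimum over configurations, and using that $x\mapsto\log x/x$ is increasing on $(0,e)$ so that the smallest fill distance yields the smallest upper bound, produces $d_m\le\exp[c\,\log h_m/h_m]$ in terms of the minimum fill distance \cref{eq:defn-min-fill-dist}. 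Substituting $m=\lfloor n/2\rfloor$ into the previous display and setting $\beta:=e^{c}>1$ gives
\[
\sigma_n \;\le\; C\,\beta^{\,\log h_{\lfloor n/2\rfloor}/h_{\lfloor n/2\rfloor}},
\]
which is the asserted \cref{eq:convergence-rate-in-fill-distance}, with $C$ and $\beta$ depending only on $\Omega$ and the bandwidth $t$, exactly as these enter the constants $c$ and $h_0$ in Wendland's estimate.

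The main obstacle is not any single inequality but the asymptotic regime and the admissibility of \cref{eq:wendland-squared-exponential-kernel-estimates}. That bound holds only once $h_{\mathscr{X}_m}\le h_0$, so I must verify $h_{\lfloor n/2\rfloor}\to 0$ as $n\to\infty$; this follows from compactness of $\Omega$, which admits $m$ points of fill distance $h_m=O(m^{-1/D})\to 0$, placing us eventually in the valid regime and simultaneously driving $\log h_{\lfloor n/2\rfloor}/h_{\lfloor n/2\rfloor}\to-\infty$, so that $\beta^{\,\log h/h}\to 0$ at a superpolynomial rate. Care is also needed in matching the squaring conventions between the distance-valued quantities of the reduced-basis framework and the variance-valued (squared) $\sigma_n,d_n$ here, and in checking that $\Omega$ satisfies the interior cone condition required by \cite[Theorem 11.22]{Wendland2004} --- automatic for a compact convex set with nonempty interior. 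These are the only places where the convexity hypothesis is genuinely used.
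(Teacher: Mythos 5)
Your proposal is correct and takes essentially the same route as the paper's proof: identify the greedy MSPE with the reduced-basis error via \cref{lem:basic-project-estimates}, apply \cref{eq:reduced-basis-greedy-rate-double-index} to pass from $\sigma_n$ to the Kolmogorov width at index $\lfloor n/2\rfloor$, and bound that width by Wendland's estimate \cref{eq:wendland-squared-exponential-kernel-estimates} evaluated at the minimum fill distance \cref{eq:defn-min-fill-dist}. The only cosmetic difference is in the squaring convention: the paper keeps the exponent $1/2$ on $d_{\lfloor n/2\rfloor}$ and obtains $\beta=e^{c/2}$ where you obtain $\beta=e^{c}$, which is immaterial since the theorem asserts only the existence of some $\beta>1$.
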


\begin{proof}
  By the monotonicity of the sequence $\left\{ \sigma_n\mid n\in\mathbb{N} \right\}$, it suffices to establish the convergence rate for a subsequence. Using directly \cref{eq:reduced-basis-greedy-rate-double-index}, \cref{eq:kolmogorov-width-as-inf-power-function}, \cref{eq:wendland-squared-exponential-kernel-estimates}, and the definition of $h_n$ in \cref{eq:defn-min-fill-dist}, we have the inequality for all $\mathbb{N}\ni n\geq N$:
\begin{equation*}
    \sigma_{2n}\leq \sqrt{2}\left\| K \right\|^{\frac{1}{2}}_{\infty,\,\Omega\times \Omega}\exp \left[ \frac{c}{2}\frac{\log h_n}{h_n} \right]=\sqrt{2}\left\| K \right\|^{\frac{1}{2}}_{\infty,\,\Omega\times \Omega}\beta^{\frac{\log h_n}{h_n}}
  \end{equation*}
where $\beta:=\exp\left(c/2\right)>1$. Here the positive constants $N=N \left( \Omega,t \right)>0$ and $c=c \left( \Omega,t \right)>0$ depend only on the geometry of $\Omega$ and the bandwidth of the squared exponential kernel. This completes the proof.
\end{proof}

Convex bodies in $\mathbb{R}^D$ are far too restricted as a class of geometric objects for modeling anatomical surfaces in our main application \cite{GPLMK2}. The rest of this section will be devoted to generalizing the convergence rate for squared exponential kernels \cref{eq:sq-exp-kernel-submfld} to their reweighted counterparts \cref{eq:sq-exp-kernel-submfld-reweighted}, and more importantly, for submanifolds of the Euclidean space. The crucial ingredient is an estimate of the type \cref{eq:wendland-squared-exponential-kernel-estimates} bounding the sup-norm of the squared power function using fill distances, tailored for restrictions of the squared exponential kernel 
\begin{equation}
\label{eq:squared-exponential-kernel-continuous-form}
  K_{\epsilon} \left( x,y \right)=\exp \left( -\frac{1}{2\epsilon}\left\| x-y \right\|^2 \right),\quad x,y\in M
\end{equation}
as well as the reweighted version
\begin{equation}
\label{eq:reweighted-kernel-continuous-form}
  K_{\epsilon}^w \left( x,y \right)=\int_M w \left( z \right) \exp \left[ -\frac{1}{2\epsilon}\left( \left\| x-z \right\|^2+\left\| z-y \right\|^2 \right) \right]\dd \mathrm{vol}_M \left( z \right),\quad x,y\in M
\end{equation}
where $w:M\rightarrow\mathbb{R}_{\geq 0}$ is a non-negative weight function. Note that when $w \left( x \right)\equiv 1$, $\forall x\in M$ the reweighted kernel \cref{eq:reweighted-kernel-continuous-form} does not coincide with the squared exponential kernel \cref{eq:squared-exponential-kernel-continuous-form}, not even up to normalization, since the domain of integration is $M$ instead of the entire $\mathbb{R}^D$; neither does na{\"i}vely enclosing the compact manifold $M$ with a compact, convex subset $\Omega$ of the ambient space and reusing \cref{thm:greedy-conditional-covariance-decay-convex} by extending/restricting functions to/from $M$ to $\Omega$ seem to work, since the samples are constrained to lie on $M$ but the convergence will be in terms of fill distances in $\Omega$. Nevertheless, the desired bound can be established using local parametrizations of the manifold, i.e., working within each local Lipschitz coordinate chart and taking advantage of the compactness of $M$.

We will henceforth impose no additional assumptions, other than compactness and smoothness, on the geometry of the Riemannian manifold $M$. In the first step we refer to a known uniform estimate from \cite[Theorem 17.21]{Wendland2004} for power functions on a compact Riemannian manifold.

\begin{lemma}
\label{lem:bounding-power-function-reweighted}
Let $M$ be a $d$-dimensional $C^{\ell}$ compact manifold isometrically embedded in $\mathbb{R}^D$ (where $D>d$), and let $\Phi\in C^{2k}\left( M\times M \right)$ be any positive definite kernel function on $M\times M$ with $2k\leq \ell$.
There exists a positive constant $h_0=h_0 \left( M \right)>0$ depending only on the geometry of the manifold $M$ such that, for any collection of $n$ distinct points $\mathscr{X}_n=\left\{ x_1,\cdots,x_n \right\}$ on $M$ with $h_{\mathscr{X}_n}\leq h_0$, the following inequality holds:
\begin{equation*}
  \Pi_{\Phi,\mathscr{X}_n}=\sup_{x\in M}P_{\Phi,\mathscr{X}_n} \left( x \right)\leq C h_{\mathscr{X}_n}^{2k}
\end{equation*}
where $C=C \left( k,M,\Phi \right)>0$ is a positive constant depending only on the manifold $M$ and the kernel function $\Phi$. This of course further implies for all $h_n\leq h_0$
\begin{equation*}
  \inf_{\mathscr{X}_n\subset M,\,\left| \mathscr{X}_n \right|=n}\Pi_{\Phi,\mathscr{X}_n}\leq C h_n^{2k}
\end{equation*}
where $h_n$ is the minimum fill distance of $n$ arbitrary points on $\Omega$ (c.f. \cref{eq:defn-min-fill-dist}).
\end{lemma}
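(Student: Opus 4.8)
The plan is to reduce the statement to the uniform scattered-data power-function estimate of \cite[Theorem 17.21]{Wendland2004}, after checking that the quantity $P_{\Phi,\mathscr{X}_n}$ appearing here is exactly the squared power function to which that theorem applies. First I would record the identification already noted just below \cref{eq:dist-between-pt-to-plane}: the function $P_{\Phi,\mathscr{X}_n}$, defined as in \cref{eq:dist-between-pt-to-plane} with $K$ replaced by $\Phi$, equals the squared power function of $\Phi$ relative to the node set $\mathscr{X}_n$ in the sense of \cite[Definition 11.2]{Wendland2004}. This is precisely the object whose decay in the fill distance \cref{eq:defn-fill-distance} is governed by the sampling inequalities of that reference, so that $\Pi_{\Phi,\mathscr{X}_n}=\sup_{x\in M}P_{\Phi,\mathscr{X}_n}(x)$ is exactly the sup-norm power-function quantity bounded there.

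Second, I would verify the hypotheses. Positive definiteness of $\Phi$ makes the Gram matrix invertible for distinct nodes and gives a well-defined native space $\mathscr{H}_\Phi$ on $M$. The smoothness assumption $\Phi\in C^{2k}(M\times M)$ with $2k\le\ell$ is exactly what is needed so that, pulled back through a $C^{\ell}$ local parametrization, the kernel is $C^{2k}$ and $\mathscr{H}_\Phi$ embeds into the Sobolev scale of order $k+d/2$; this is the regularity responsible for the power-function order $2k$ (recall the target is the \emph{squared} power function, so one expects $h^{2k}$ rather than $h^{k}$, consistent with the polynomial Mat\'ern-type rates of \cite{WuSchaback1993} rather than the exponential rate of \cref{eq:wendland-squared-exponential-kernel-estimates}).

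Third, I would globalize. Since $M$ is compact I would fix a finite atlas of $C^{\ell}$ charts with Lipschitz parametrizations so that, on each chart, the intrinsic fill distance $h_{\mathscr{X}_n}$ (measured with the ambient Euclidean distance $\|x-x_j\|$) is comparable, up to fixed multiplicative constants, to the Euclidean fill distance of the pulled-back nodes in the parameter domain. Applying the local $C^{2k}$ power-function bound in each chart and taking the maximum of the finitely many constants yields a single $C=C(k,M,\Phi)$ and a single threshold $h_0=h_0(M)$ valid on all of $M$, giving $\Pi_{\Phi,\mathscr{X}_n}\le C h_{\mathscr{X}_n}^{2k}$ whenever $h_{\mathscr{X}_n}\le h_0$. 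Since this holds for every configuration $\mathscr{X}_n$ with $h_{\mathscr{X}_n}\le h_0$, taking the infimum over all $n$-point configurations immediately yields the second, minimum-fill-distance inequality in terms of $h_n$ from \cref{eq:defn-min-fill-dist}.

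The main obstacle is this globalization step: ensuring that $C$ and $h_0$ do not degenerate as nodes cluster near chart boundaries or in regions of high extrinsic curvature. This is exactly what compactness buys, namely a finite subcover with uniformly controlled metric distortion between the ambient distance and the chart coordinates, so the per-chart estimates patch together without loss. If instead one takes \cite[Theorem 17.21]{Wendland2004} as already formulated directly for compact manifolds, this obstacle is absorbed into the cited result and the argument collapses to the hypothesis check of the second paragraph followed by the trivial passage to the infimum.
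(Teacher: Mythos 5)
Your proposal is correct and follows essentially the same route as the paper: the paper's proof is a direct citation of \cite[Theorem 17.21]{Wendland2004}, noting only that the power function there is the square root of $P_{\Phi,\mathscr{X}_n}$ as defined in \cref{eq:defn-power-function} — exactly the identification you make in your first paragraph. Your additional unpacking of the chart-by-chart globalization is a (correct) reconstruction of what the cited theorem already handles internally for compact manifolds, as you yourself observe in your closing remark.
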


\begin{proof}
  This is essentially \cite[Theorem 17.21]{Wendland2004}, with the only adaptation that the definition of the power function throughout \cite{Wendland2004} is the square root of the $P_{\Phi,\mathscr{X}_n}$ in our definition \cref{eq:defn-power-function}.
\end{proof}

\cref{lem:bounding-power-function-reweighted} suggests that the convergence of \cref{alg:gaussian-process-landmarking} is faster than any polynomial of $h_n$. The dependence on $h_n$ can be made more direct in terms of the number of samples $n$ by the following geometric lemma.

\begin{lemma}
\label{lem:fill-to-packing}
  Let $M$ be a $d$-dimensional $C^{\ell}$ compact Riemannian manifold isometrically embedded in $\mathbb{R}^D$ (where $D>d$). Denote $\omega_{d-1}$ for the surface measure of the unit sphere in $\mathbb{R}^d$, and $\mathrm{Vol} \left( M \right)$ for the volume of $M$ induced by the Riemannian metric. There exists a positive constant $N=N \left( M \right)>0$ depending only on the manifold $M$ such that
\begin{equation*}
  h_n\leq \left( \frac{2^{d+1} d}{\omega_{d-1}}\mathrm{Vol}\left( M \right) \right)^{\frac{1}{d}}\cdot n^{-\frac{1}{d}}\qquad\textrm{for any $\mathbb{N}\ni n\geq N$.}
\end{equation*}
\end{lemma}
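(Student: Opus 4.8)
The plan is to prove this purely geometric statement by a standard packing–covering argument, exhibiting a near-optimal configuration of $n$ points rather than characterizing the true minimizer. Throughout, distances are the ambient Euclidean distances appearing in the definition \eqref{eq:defn-fill-distance}; I write $B(x,s)$ for the Euclidean ball of radius $s$ centered at $x$ and $B_g(x,s)\subset M$ for the open geodesic ball. Since the embedding is isometric, the intrinsic distance dominates the chord distance, $\|x-y\|\leq d_g(x,y)$, and hence $B_g(x,s)\subseteq B(x,s)\cap M$.

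First I would fix a radius $r>0$ and select a maximal $r$-separated subset $\mathscr{X}=\left\{ x_1,\dots,x_m \right\}\subset M$, i.e. a maximal collection with $\|x_i-x_j\|\geq r$ for $i\neq j$. Maximality yields two facts. On one hand, every $y\in M$ lies within Euclidean distance $r$ of some $x_i$ (otherwise $y$ could be adjoined), so $\mathscr{X}$ has fill distance $h_{\mathscr{X}}\leq r$. On the other hand, the balls $B(x_i,r/2)$ are pairwise disjoint, and therefore so are the sets $B(x_i,r/2)\cap M$ inside $M$.

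The second step is a volume comparison bounding $m$ from above. Using compactness of $M$ to obtain a uniform positive lower bound on the injectivity radius and uniform two-sided curvature bounds, the Taylor expansion (or Bishop--G\"unther comparison) of the volume of a small geodesic ball furnishes a threshold $s_0=s_0(M)>0$ such that $\mathrm{Vol}\left( B_g(x,s) \right)\geq \tfrac12\cdot\tfrac{\omega_{d-1}}{d}\,s^d$ for all $x\in M$ and all $s\leq s_0$; here $\omega_{d-1}/d$ is exactly the volume of the Euclidean unit $d$-ball, and the factor $\tfrac12$ absorbs the curvature correction. Combining the disjointness with $B_g(x_i,r/2)\subseteq B(x_i,r/2)\cap M$, then for $r/2\leq s_0$ I obtain $\sum_{i=1}^m \mathrm{Vol}\left( B_g(x_i,r/2) \right)\leq\mathrm{Vol}\left( M \right)$, whence $m\cdot \tfrac{\omega_{d-1}}{2^{d+1}d}\,r^d\leq \mathrm{Vol}\left( M \right)$ and therefore $m\leq \frac{2^{d+1}d}{\omega_{d-1}}\mathrm{Vol}\left( M \right)\,r^{-d}$.

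Finally I would calibrate $r$ to the target count. Setting $r=r_n:=\left( \frac{2^{d+1}d}{\omega_{d-1}}\mathrm{Vol}\left( M \right) \right)^{1/d}n^{-1/d}$ makes the right-hand side of the packing bound equal to $n$, so the maximal configuration has $m\leq n$ points while still having fill distance at most $r_n$. Since adjoining further points never increases the fill distance, padding $\mathscr{X}$ to exactly $n$ points produces an $n$-point set witnessing $h_n\leq r_n$, which is the claimed inequality. The constant $N=N(M)$ is simply the least $n$ for which $r_n/2\leq s_0(M)$; because $r_n\to 0$, such $N$ exists and depends only on $M$. The main obstacle is the third step: establishing the volume lower bound \emph{uniformly} over $x\in M$ with the precise constant $\tfrac12\cdot\tfrac{\omega_{d-1}}{d}$, which is where compactness (uniform injectivity radius and curvature) enters and where the slack between $2^{d}$ and the stated $2^{d+1}$ is consumed; the passage between geodesic and Euclidean balls must also be handled, but the one-sided inclusion $B_g\subseteq B$ suffices and introduces no further loss.
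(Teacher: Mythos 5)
Your argument is correct and follows essentially the same route as the paper: both proofs come down to the disjointness of half-radius balls centered at a separated set, the uniform small-ball volume lower bound $\mathrm{Vol}\geq \tfrac12\cdot\tfrac{\omega_{d-1}}{d}s^d$ on a compact manifold, and comparison with $\mathrm{Vol}(M)$, yielding the identical constant $\left(2^{d+1}d\,\mathrm{Vol}(M)/\omega_{d-1}\right)^{1/d}$. The only cosmetic differences are that you exhibit a maximal $r$-separated witness configuration and calibrate $r$, while the paper runs the covering-number/packing-number inequality at the optimal fill distance $h_n$ and invokes the Karp--Pinsky expansion for extrinsic (chordal) balls where you use the geodesic-ball expansion together with the inclusion $B_g(x,s)\subseteq B(x,s)\cap M$.
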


\begin{proof}
  For any $r>0$ and $x\in M$, we denote $B^{D}_r \left( x \right)$ for the (extrinsic) $D$-dimensional Euclidean ball centered at $x\in M$, and set $B_r \left( x \right):=B^D_r \left( x \right)\cap M$. In other words, $B_r \left( x \right)$ is a ball of radius $r$ centered at $x\in M$ with respect to the ``chordal'' metric on $M$ induced from the ambient Euclidean space $\mathbb{R}^D$. Define the \emph{covering number} and the \emph{packing number} for $M$ with respect to the chordal metric balls by
\begin{equation*}
  \begin{aligned}
    \mathscr{N}\left( r \right)&:=\mathscr{N}\left( M,\left\| \cdot \right\|_D,r \right)\\
    &:=\min_{n\in\mathbb{N}}\left\{ M\subset \bigcup_{i=1}^n B_r \left( x_i \right)\mid x_i\in M, 1\leq i\leq n \right\},\\
    \mathscr{P}\left( r \right)&:=\mathscr{P}\left( M,\left\| \cdot \right\|_D,r \right)\\
    &:=\max_{n\in\mathbb{N}}\Bigg\{ \bigcup_{i=1}^n B_{r/2} \left( x_i \right)\subset M, B_{r/2} \left( x_i \right)\cap B_{r/2} \left( x_j \right)=\emptyset\\
    &\hspace{1in}\textrm{for all } 1\leq i\neq j\leq n\,\Big|\,x_i\in M,1\leq i\leq n \Bigg\}.
  \end{aligned}
\end{equation*}
By the definition of fill distance and $h_n$ (c.f. \cref{eq:defn-min-fill-dist}), the covering number $\mathscr{N}\left( h_n \right)$ is lower bounded by $n$; furthermore, by the straightforward inequality $\mathscr{P}\left( r \right)\geq \mathscr{N}\left( r \right)$ for all $r>0$, we have
\begin{equation*}
  n<\mathscr{N} \left( h_n \right)\leq \mathscr{P}\left( h_n \right),
\end{equation*}
i.e. there exist a collection of $n$ points $x_1,\cdots,x_n\in M$ such that the $n$ chordal metric balls $\left\{ B_{h_n/2}\left( x_i \right)\mid 1\leq i\leq n \right\}$ form a packing of $M$.  Thus
\begin{equation*}
  \sum_{i=1}^n\mathrm{Vol} \left( B_{h_n/2}\left( x_i \right) \right)\leq \mathrm{Vol}\left( M \right)<\infty
\end{equation*}
where the last inequality follows from the compactness of $M$. The volume of each $B_{h_n/2}\left( x_i \right)$ can be expanded asymptotically for small $h_n$ as (c.f. \cite{KarpPinsky1989})
\begin{equation}
\label{eq:packing-volume}
  \mathrm{Vol}\left( B_{h_n/2} \left( x \right) \right)=\frac{\omega_{d-1}}{d} \left( \frac{h_n}{2} \right)^d \left[ 1+\frac{2 \left\| B \right\|_x^2-\left\| H \right\|_x^2}{8 \left( d+2 \right)}\left( \frac{h_n}{2} \right)^2 \right]+O \left( h_n^{d+3} \right)\,\,\textrm{as $h_n\rightarrow 0$}
\end{equation}
where $\omega_{d-1}$ is the surface measure of the unit sphere in $\mathbb{R}^d$, $B$ is the second fundamental form of $M$, and $H$ is the mean curvature normal. The compactness of $M$ ensures the boundedness of all these extrinsic curvature terms. Pick $n$ sufficiently large so that $h_n$ is sufficiently small (again by the compactness of $M$) to ensure
\begin{equation*}
  \mathrm{Vol}\left( B_{h_n/2} \left( x \right) \right)\geq \frac{\omega_{d-1}}{2d} \left( \frac{h_n}{2} \right)^d.
\end{equation*}
It then follows from \cref{eq:packing-volume} that
\begin{equation*}
  \frac{n\omega_{d-1}}{2d}\left( \frac{h_n}{2} \right)^d\leq \mathrm{Vol}\left( M \right)\quad\Rightarrow\quad h_n\leq \left( \frac{2^{d+1}d}{\omega_{d-1}}\mathrm{Vol}\left( M \right) \right)^{\frac{1}{d}}\cdot n^{-\frac{1}{d}}.
\end{equation*}
\end{proof}

We are now ready to conclude that \cref{alg:gaussian-process-landmarking} converges faster than any inverse polynomial in the number of samples with our specific choice of kernel functions, regardless of the presence of reweighting.

\begin{theorem}
\label{thm:greedy-conditional-covariance-decay}
Let $M$ be a $d$-dimensional $C^{\infty}$ compact manifold isometrically embedded in $\mathbb{R}^D$ (where $D>d$), and let $\Phi\in C^{\infty}\left( M\times M \right)$ be any positive definite kernel function on $M$. For any $k\in\mathbb{N}$, there exist positive constants $N=N \left( M \right)>0$ and $C_k=C_k \left( M,\Phi \right)>0$ such that
\begin{equation*}
  \sigma_n\leq C_k n^{-\frac{k}{d}}\quad\textrm{for all $n\geq N$}.
\end{equation*}
Equivalently speaking, \cref{alg:gaussian-process-landmarking} converges at rate $O \left( n^{-\frac{k}{d}} \right)$ for all $k\in\mathbb{N}$, but with constants possibly depending on $k$.
\end{theorem}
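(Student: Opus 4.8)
The plan is to chain together the three ingredients just established, so that the theorem becomes a one-line consequence of \cref{lem:basic-project-estimates}, the reduced-basis comparison \eqref{eq:reduced-basis-greedy-rate-double-index}, \cref{lem:bounding-power-function-reweighted}, and \cref{lem:fill-to-packing}. By \cref{lem:basic-project-estimates} the greedy maximum MSPE $\sigma_n$ produced by \cref{alg:gaussian-process-landmarking} is exactly the reduced-basis greedy error for the compact set $\mathscr{F}$ of \eqref{eq:compact-reduced-basis-set-our-setting} inside the RKHS $\mathscr{H}_\Phi$, while $d_n$ is the Kolmogorov width \eqref{eq:kolmogorov-width-as-inf-power-function}; hence the comparison inequality \eqref{eq:reduced-basis-greedy-rate-double-index} applies verbatim, and the whole problem reduces to producing a sufficiently fast upper bound on $d_m=\inf_{\mathscr{X}_m}\Pi_{\Phi,\mathscr{X}_m}$.

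First I would bound $d_m$. Since $M$ is $C^{\infty}$ and $\Phi\in C^{\infty}\left( M\times M \right)$, \cref{lem:bounding-power-function-reweighted} is available at \emph{every} smoothness order, so for each fixed $k\in\mathbb{N}$ there are $h_0=h_0\left( M \right)>0$ and $C=C\left( k,M,\Phi \right)>0$ with $d_m\leq C\,h_m^{2k}$ whenever $h_m\leq h_0$. Next I would convert the fill distance into a sample count via \cref{lem:fill-to-packing}: for $m\geq N_0\left( M \right)$ one has $h_m\leq A\,m^{-1/d}$ with $A=A\left( M \right)$, which in particular forces $h_m\leq h_0$ once $m$ exceeds a threshold depending only on $M$. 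Substituting yields $d_m\leq C A^{2k}m^{-2k/d}$ for all $m\geq N_1\left( M \right)$.

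Finally I would feed this into \eqref{eq:reduced-basis-greedy-rate-double-index} with $m=\lfloor n/2\rfloor$:
\begin{equation*}
  \sigma_n\leq \sqrt{2}\,\left\| \Phi \right\|_{\infty,M\times M}^{1/2}\,d_{\lfloor n/2\rfloor}^{1/2}\leq \sqrt{2}\,\left\| \Phi \right\|_{\infty,M\times M}^{1/2}\,\left( C A^{2k} \right)^{1/2}\lfloor n/2\rfloor^{-k/d}.
\end{equation*}
Using $\lfloor n/2\rfloor\geq n/4$ for $n\geq 2$ to absorb the floor into the constant gives $\sigma_n\leq C_k\,n^{-k/d}$ for all $n\geq N$, where $N=N\left( M \right)$ is chosen so that $\lfloor n/2\rfloor\geq N_1$ and $C_k=C_k\left( M,\Phi \right)$ collects all the $k$-dependent factors. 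Since $k\in\mathbb{N}$ was arbitrary, this is exactly the claimed super-polynomial decay.

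The argument is essentially bookkeeping once the two geometric lemmas are in hand, so the points requiring care are structural rather than computational. The main obstacle is to verify that the terminal index $N$ may be taken \emph{independent of $k$}, as the statement demands: this holds because both the packing threshold in \cref{lem:fill-to-packing} and the smallness condition $h_m\leq h_0$ in \cref{lem:bounding-power-function-reweighted} depend only on $M$, with all $k$-dependence confined to the multiplicative constant $C=C\left( k,M,\Phi \right)$. A secondary point is to confirm that \eqref{eq:reduced-basis-greedy-rate-double-index} is legitimately applicable, which rests on the identification through \cref{lem:basic-project-estimates} of the algorithm's variance criterion \eqref{eq:active-learning-rule} with the reduced-basis distance, together with the compactness of $\mathscr{F}$. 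I note finally that the presence of reweighting plays no special role here, since \cref{lem:bounding-power-function-reweighted} is stated for an arbitrary $C^{\infty}$ positive definite kernel, so the reweighted kernel \eqref{eq:reweighted-kernel-continuous-form} is covered automatically.
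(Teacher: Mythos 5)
Your proposal is correct and follows exactly the route the paper intends: its proof of this theorem is the one-line instruction to combine \cref{lem:bounding-power-function-reweighted}, \cref{lem:fill-to-packing}, and the smoothness of the kernel, fed through the reduced-basis comparison \eqref{eq:reduced-basis-greedy-rate-double-index} via the identification in \cref{lem:basic-project-estimates}. Your expansion of that chain, including the observation that the threshold $N$ depends only on $M$ while all $k$-dependence is absorbed into $C_k$, is exactly the intended bookkeeping.
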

\begin{proof}
  Use \cref{lem:bounding-power-function-reweighted}, \cref{lem:fill-to-packing} and the regularity of the kernel function $\Phi$.
\end{proof}

It is natural to conjecture that a faster rate of convergence than the conclusion of \cref{thm:greedy-conditional-covariance-decay}, for instance exponential rate of convergence, should hold for the reweighted kernel \eqref{eq:reweighted-kernel-continuous-form}, or at least for the Euclidean radial basis kernel \cref{eq:squared-exponential-kernel-continuous-form}; this can be empirically validated with numerical experiments, see e.g. the log-log plots in \cref{fig:loglog-conv-rate} depicting the decay of MSPE (i.e., $\sigma_n^2$) with respect to the increasing number of landmarks (i.e., $n$). Unfortunately, \cref{thm:greedy-conditional-covariance-decay} is about as far as we can get with our current techniques, unless we impose additional assumptions on the regularity of the manifolds of interest. It is tempting to proceed directly as in \cite[Theorem 17.21]{Wendland2004} by working locally on coordinate charts and citing the exponential convergence result for radial basis kernels in \cite[Theorem 11.22]{Wendland2004}; unfortunately, even though kernel $K_{\epsilon}$ is of radial basis type in the ambient space $\mathbb{R}^D$, it is generally no longer of radial basis type in local coordinate charts, unless one imposes additional restrictive assumptions on the growth of the derivatives of local parametrization maps (e.g. all coordinate maps are affine). We will not pursue the theoretical aspects of these additional assumptions in this paper.

\begin{figure}
\subfloat[Reweighted kernel \eqref{eq:reweighted-kernel-continuous-form}]{\includegraphics[width=0.5\textwidth]{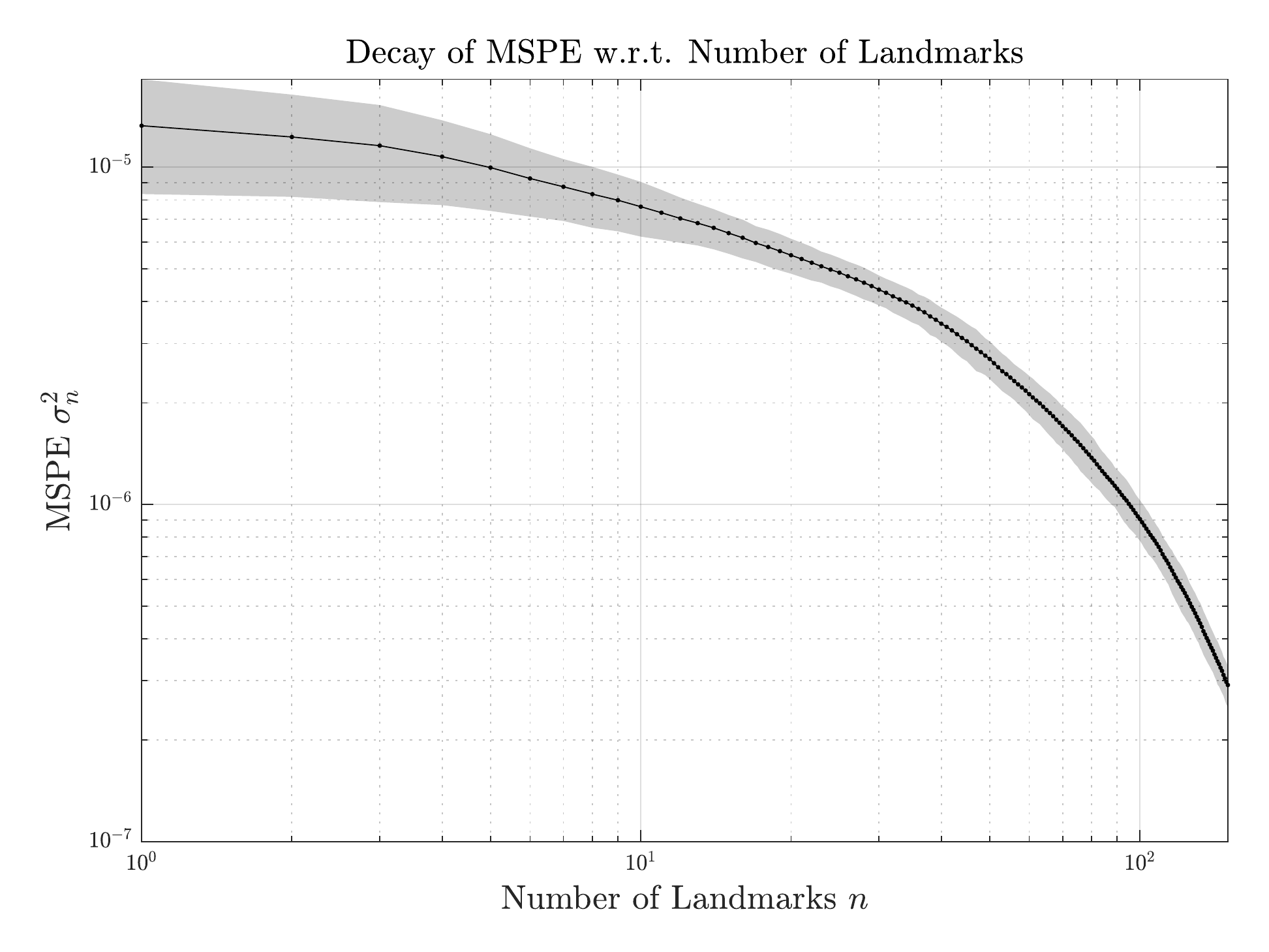}}
\subfloat[Euclidean kernel \eqref{eq:squared-exponential-kernel-continuous-form}]{\includegraphics[width= 0.5\textwidth]{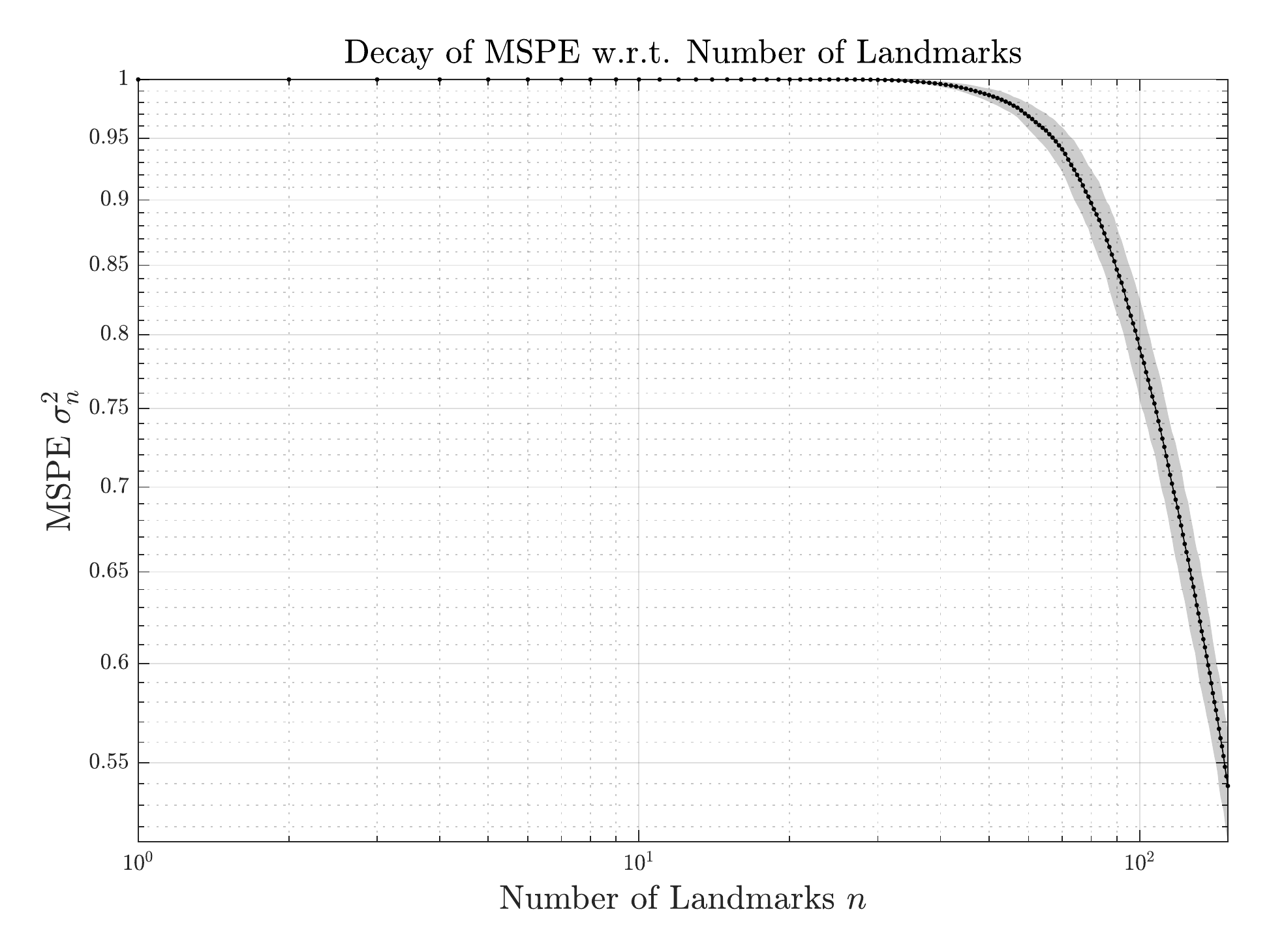}}
\caption{Log-log plots illustrating the convergence of MSPE with respect to the number of Gaussian process landmarks produced using the reweighted kernel \eqref{eq:reweighted-kernel-continuous-form} or the Euclidean kernel \eqref{eq:squared-exponential-kernel-continuous-form}, for a collection of $116$ second mandibular molars of prosimian primates and closely related non-primates \cite{PNAS2011,GPLMK2}. Each point on either curve is obtained by averaging the $116$ MSPE over the entire dataset, and the transparent bands represent pointwise confidence bands of two standard deviations. For both plots we vary the number of landmarks from $1$ to $150$; the total number of vertices on each of the $116$ triangular meshes vary around $5000$. For both plots, the MSPE decays linearly for sufficiently large $n$ on a log-log scale, suggesting exponential convergence with respect to the number of Gaussian process landmarks.}
\label{fig:loglog-conv-rate}
\end{figure}

\begin{remark}
The asymptotic optimality of the rate established in \cref{thm:greedy-conditional-covariance-decay} for Gaussian process landmarking follows from \cref{eq:greedy-reduced-basis-theorem}. In other words, the Gaussian process landmarking algorithm leads to a rate of decay of the $\infty$-norm of the pointwise MSPE that is at least as fast as any other landmarking algorithms, including random or uniform sampling on the manifold. In our application of comparative biology that motivated this paper, it is more important that Gaussian process landmarking is capable of identifying biologically meaningful and operationally homologous points across the anatomical surfaces even when the number of landmarks is not large ($n\ll \infty$); see \cite{GPLMK2} for more details. A more thorough theory explaining this advantageous aspect of Gaussian process landmarking will be left for future work.
\end{remark}

\section{Discussion and Future Work}
\label{sec:disc-future-work}

This paper discusses a greedy algorithm for automatically selecting representative points on compact manifolds, motivated by the methodology of experimental design with Gaussian process prior in statistics. With a carefully modified heat kernel specified as the covariance function in the Gaussian process prior, our algorithm is capable of producing biologically highly meaningful feature points on some anatomical surfaces. Application of this landmarking scheme for real anatomical datasets is detailed in a companion paper \cite{GPLMK2}.

A future direction of interest is to build theoretical analysis for the optimal experimental design aspects of manifold learning: Whereas existing manifold learning algorithms estimate the underlying manifold \emph{from} discrete samples, our algorithm concerns economical strategies for encoding geometric information \emph{into} discrete samples. The landmarking procedure can also be interpreted as a compression scheme for manifolds; correspondingly, standard manifold learning algorithms may be understood as a decoding mechanism. Our theory is also of potential interest to adaptive matrix sensing and image completion problems, in which sensing procedures and subsampling schemes can be designed to collect more information for the ease of reconstruction. Some related work of this type include \cite{AB2013,WZ2013,WYS2017} and the references therein.

This work stems from an attempt to impose Gaussian process priors on diffeomorphisms between distinct but comparable biological structures, with which a rigorous Bayesian statistical framework for biological surface registration may be developed. The motivation is to measure the uncertainty of pairwise bijective correspondences automatically computed from geometry processing and computer vision techniques. We hope this MSPE-based sequential landmarking algorithm will shed light upon generalizing covariance structures from a single shape to pairs or even collections of shapes for collection shape analysis.

\section*{Acknowledgments}
TG would like to thank Peng Chen (UT Austin) for pointers to the reduced basis method literature, Chen-Yun Lin (Duke) for many useful discussions on heat kernel estimates, Daniel Sanz-Alonso (University of Chicago) for general discussion on Gaussian processes and RKHS, and Yingzhou Li (Duke) and Jianfeng Lu (Duke) for discussions on the numerical linear algebra perspective. The authors would also like to thank Shaobo Han, Rui Tuo, Sayan Mukherjee, Robert Ravier, Shan Shan, and Albert Cohen for many inspirational discussions, as well as Ethan Fulwood, Bernadette Perchalski, Julia Winchester, and Arianna Harrington for assistance with collecting observer landmarks. Last but not the least, we sincerely appreciate the constructive feedback from the anonymous reviewers.

\appendix

\section{Reproducing Kernel Hilbert Spaces}
\label{sec:repr-kern-hilb}

For any positive semi-definite symmetric kernel function $K:M\times M\rightarrow \mathbb{R}$ defined on a complete metric measure space $M$, Mercer's Theorem \cite[Theorem 3.6]{CS2000} states that $K$ admits a uniformly convergent expansion of the form
\begin{equation*}
  K \left( x,y \right)=\sum_{i=0}^{\infty}e^{-\lambda_i}\phi_i \left( x \right)\phi_i \left( y \right),\quad\forall x,y\in M,
\end{equation*}
where $\left\{\phi_i\right\}_{i=0}^{\infty}\subset L^2\left( M \right)$ are the eigenfunctions of the integral operator
\begin{equation*}
  \begin{aligned}
    T_K:L^2 \left( M \right)&\rightarrow L^2 \left( M \right)\\
    T_Kf \left( x \right)&:= \int_M K \left( x,y \right)f \left( y \right)\dd\mathrm{vol}_M \left( y \right),\quad\forall f\in L^2 \left( M \right)
  \end{aligned}
\end{equation*}
and $e^{-\lambda_i}, i=0,1,\cdots$, ordered so that $e^{-\lambda_0}\geq e^{-\lambda_1}\geq e^{-\lambda_2}\geq\cdots$, are the eigenvalues of this integral operator corresponding to the eigenfunctions $\phi_i, i=0,1,\cdots$, respectively. Regression under this framework amounts to restricting the regression function to lie in the Hilbert space
\begin{equation}
\label{eq:defn-rkhs-intro}
  \mathscr{H}_K:= \left\{ f=\sum_{i=0}^{\infty}\alpha_i\phi_i \,\bigg|\, \alpha_i\in\mathbb{R}, \displaystyle\sum_{i=0}^{\infty}e^{\lambda_i}\alpha_i^2<\infty \right\}
\end{equation}
on which the inner product is defined as
\begin{equation}
\label{eq:rkhs-inner-product}
  \left\langle f,g \right\rangle_{\mathscr{H}_K}=\sum_{i=0}^{\infty}e^{\lambda_i} \left\langle f,\phi_i \right\rangle_{L^2\left(M\right)} \left\langle g,\phi_i \right\rangle_{L^2 \left( M \right)}.
\end{equation}
The reproducing property is reflected in the identity
\begin{equation}
\label{eq:rkhs-reproducing-property-kernel}
  \left\langle K \left( \cdot,x \right),K \left( \cdot,y \right) \right\rangle_{\mathscr{H}_K} = K \left( x,y \right)\qquad\forall x,y\in M.
\end{equation}
Borrowing terminologies from kernel-based learning methods (see e.g. \cite{CS2000} and \cite{SS2001}), the eigenfunctions and eigenvalues of $T_K$ define a \emph{feature mapping}
\begin{equation}
\label{eq:feature-mapping}
  M\ni x\longmapsto \Phi \left( x \right):= \left( e^{-\lambda_0/2}\phi_0 \left( x \right),e^{-\lambda_1/2}\phi_1 \left( x \right),\cdots,e^{-\lambda_i/2}\phi_i \left( x \right),\cdots \right)\in\ell^2
\end{equation}
such that the kernel value $K \left( x,y \right)$ at an arbitrary pair $x,y\in M$ is given exactly by the inner product of $\Phi \left( x \right)$ and $\Phi \left( y \right)$ in the feature space $\ell^2$, i.e.
\begin{equation*}
  K \left( x,y \right) = \left\langle \Phi \left( x \right),\Phi \left( y \right) \right\rangle_{\ell^2},\quad\forall x,y\in M.
\end{equation*}
This interpretation leads to the following equivalent form of the RKHS \eqref{eq:defn-rkhs-intro}
\begin{equation}
\label{eq:defn-rkhs-intro-reinterp}
\begin{aligned}
  \mathscr{H}_K &= \left\{ f=\displaystyle \sum_{i=0}^{\infty}\beta_i \cdot e^{-\lambda_i/2}\phi_i= \left\langle \beta,\Phi \right\rangle_{\ell^2} \,\bigg|\,\beta=\left( \beta_0,\beta_1,\cdots,\beta_i,\cdots \right)\in\ell^2 \right\}\\
  &=\overline{\mathrm{span}\left\{ \sum_{i\in I}a_i K \left( \cdot,x_i \right)\mid a_i\in\mathbb{R},x_i\in M,\mathrm{card}\left( I \right)<\infty \right\}}.
\end{aligned}
\end{equation}
In other words, the RKHS framework embeds the Riemannian manifold $M$ into an infinite dimensional Hilbert space $\ell^2$, and converts the (generically) nonlinear regression problem on $M$ into a linear regression problem on a subset of $\ell^2$. We refer interested readers to \cite{BBG1994,JMS2008,Wu2017} for more discussions of this type of embedding in the nonlinear dimension reduction literature.

\bibliographystyle{siamplain}
\bibliography{references}
\end{document}